\renewcommand*{\@opargbegintheorem}[3]{\trivlist
      \item[\hskip \labelsep{\bfseries #1\ #2}] \textbf{(#3)}\ \itshape}
\newcommand{\framework}{\ensuremath{\mathsf{Eunomia}}\xspace}
\newcommand{\dt}{\ensuremath{\framework^\mathsf{DET}}\xspace}
\newcommand{\kh}{\ensuremath{\framework^\mathsf{KH}}\xspace}
\begin{document}

\title{Equivalence-based Security for Querying Encrypted Databases: Theory and Application to Privacy Policy Audits}

\numberofauthors{3} 
%
\author{
%
%
\alignauthor
Omar Chowdhury\\
       \affaddr{Purdue University}\\
       \affaddr{West Lafayette, Indiana}\\
       \email{ochowdhu@purdue.edu}
\alignauthor
Deepak Garg\\
       \affaddr{MPI-SWS}\\
       \affaddr{Saarbr\"{u}cken, Germany}\\
       \email{dg@mpi-sws.org}
\alignauthor
Limin Jia, Anupam Datta\\
       \affaddr{Carnegie Mellon University}\\
       \affaddr{Pittsburgh, Pennsylvania}\\
       \email{\{liminjia,danupam\}@cmu.edu}
}

\maketitle              

\newcommand{\notes}[1]{\textcolor{red}{#1}}
\newcommand{\dg}[1]{\notes{Deepak says: #1}}
\newcommand{\limin}[1]{\notes{Limin says: #1}}
\newcommand{\anupam}[1]{\notes{Anupam says: #1}}
\newcommand{\omar}[1]{\notes{Omar says: #1}}

\newcommand{\eat}[1]{}
\newcommand{\Paragraph}[1]{\vspace{2pt}\noindent\textbf{\textit{#1}}}
\newcommand{\evalPoint}{\ensuremath{\mathit{evalPtr}}\xspace}
\newcommand{\curPoint}{\ensuremath{\mathit{curPtr}}\xspace}

\newcommand{\etal}{\textit{et al.}\xspace}
\newcommand{\ie}{\textit{i.e.}\xspace}
\newcommand{\eg}{\textit{e.g.}\xspace}
\newcommand{\viz}{\textit{viz.}\xspace}
\newcommand{\etc}{\textit{etc.}\xspace}

\newcommand{\fe}{\pred{FE}\xspace}
\newcommand{\var}{\pred{Var}}
\newcommand{\const}{\pred{Const}\xspace}
\newcommand{\arity}[1]{\ensuremath{\alpha(\pred{#1})}\xspace}

\newcommand{\cT}{\ensuremath{\mathcal{T}}\xspace}

\newcommand{\state}{\ensuremath{\pi}\xspace}
\newcommand{\statedown}[1]{\ensuremath{\pi_{#1}}\xspace}
\newcommand{\stateup}[1]{\ensuremath{\pi^{#1}}\xspace}
\newcommand{\structure}[1]{\ensuremath{\mathbb{S}_{#1}}\xspace}
\newcommand{\structureud}[2]{\ensuremath{\mathbb{S}_{#1}^{#2}}\xspace}
\newcommand{\idx}{\ensuremath{idx}\xspace}
\newcommand{\map}{\ensuremath{\mathcal{A}}\xspace}
\newcommand{\result}{\ensuremath{\mathbb{R}}\xspace}

\newcommand{\yesterday}{\ensuremath{\LTLcircleminus}\xspace}
\newcommand{\tomorrow}{\ensuremath{\LTLcircle}\xspace}
\newcommand{\henceforth}{\ensuremath{\LTLsquare}\xspace}
\newcommand{\eventually}{\ensuremath{\LTLdiamond}\xspace}
\newcommand{\historically}{\ensuremath{\LTLsquareminus}\xspace}
\newcommand{\once}{\ensuremath{\LTLdiamondminus}\xspace}
\newcommand{\since}{\ensuremath{\,\mathrm{\cal S}\,}\xspace}
\newcommand{\until}{\ensuremath{\,\mathrm{\cal U}\,}\xspace}
\newcommand{\weaksince}{\ensuremath{\,\mathrm{\cal B}\,}\xspace}
\newcommand{\weakuntil}{\ensuremath{\,\mathrm{\cal W}\,}\xspace}
\newcommand{\weakyesterday}{\ensuremath{\LTLcircletilde}}

\newcommand{\TIME}{\ensuremath{\tau}\xspace}
\newcommand{\LOG}{\ensuremath{\mathcal{L}}\xspace}
\newcommand{\PF}{\ensuremath{\psi}\xspace}
\newcommand{\FS}{FSOV}
\newcommand{\BS}{BuildTemporalStructure}
\newcommand{\FSWV}{FSWV}

\newcommand{\cur}{\ensuremath{\chi_C}\xspace}
\newcommand{\fut}{\ensuremath{\chi_F}\xspace}
\newcommand{\bld}{\textbf{B}\xspace}
\newcommand{\nbld}{\textbf{NB}\xspace}
\newcommand{\chk}{\textbf{C}\xspace}
\newcommand{\tl}{\textbf{TL}\xspace}
\newcommand{\ut}{\textbf{IT}\xspace}
\newcommand{\tc}{\textbf{tc}\xspace}
\newcommand{\tb}{\textbf{tb}\xspace}
\newcommand{\na}{\textbf{na}\xspace}
\newcommand{\ms}[1]{\ensuremath{\{#1\}}\xspace}

\newcommand{\mymap}{\ensuremath{\delta}\xspace}
\newcommand{\bsat}{\ensuremath{\vdash_{\textbf{\color{blue}g}}}\xspace}
\newcommand{\nsat}{\ensuremath{\vdash}\xspace}

\newcommand{\bformula}{\texttt{B-formula}\xspace}
\newcommand{\bformulas}{\texttt{B-formulas}\xspace}

\newcommand{\funcname}[1]{\textbf{\texttt{#1}}}

\newcommand{\cc}{\funcname{checkCompliance}\xspace} 
\newcommand{\bts}{\funcname{uSS}\xspace} 
\newcommand{\ips}{\funcname{ips}\xspace} 
\newcommand{\sat}{\funcname{esat}\xspace} 
\newcommand{\dom}{\funcname{dom}\xspace} 
\newcommand{\monitor}{\funcname{pr\'ecis}\xspace}

\newcommand{\statesize}{\ensuremath{\Upsilon}\xspace}

\newcommand{\tsub}{\ensuremath{\mathsf{b\mbox{-}tsub}}\xspace}
\newcommand{\stsub}{\ensuremath{\mathsf{b\mbox{-}s\mbox{-}tsub}}\xspace}

\newcommand{\ins}{\ensuremath{\sigma_{\mathrm{in}}}\xspace}
\newcommand{\DELAY}{\ensuremath{\Delta}\xspace}
\newcommand{\delay}{\ensuremath{\delta}\xspace}
\newcommand{\invalid}{\ensuremath{\sigma_\bot}\xspace}

\newcommand{\lsat}{\ensuremath{\vdash_{\textbf{\color{magenta}tl}}}\xspace}

\newcommand{\inp}{\ensuremath{\chi_I}\xspace}
\newcommand{\outp}{\ensuremath{\chi_O}\xspace}

\newcommand{\SFEC}{\funcname{EA}\xspace}

\newcommand{\ereduce}{\pred{\mathbf{ereduce}}}
\newtheorem{theorem}{Theorem}
\newtheorem{lemma}{Lemma}
\newtheorem{axiom}{Axiom}
\newtheorem{corollary}{Corollary}
\newtheorem{definition}{Definition}
\newtheorem{example}{Example}
\newtheorem{myclaim}{Claim}

\newcommand{\emakeNode}{\pred{MakeNode}}
\newcommand{\makeNode}[1]{\ensuremath{\mathsf{MakeNode}(\langle#1\rangle)}}
\newcommand{\addChild}[1]{\ensuremath{\pred{AddChild}(#1)}}
\newcommand{\keys}{\ensuremath{\mathcal{K}}\xspace}
\newcommand{\escheme}{\ensuremath{\delta}\xspace}
\newcommand{\cM}{\ensuremath{\mathcal{M}}\xspace}
\newcommand{\cB}{\ensuremath{\mathcal{B}}\xspace}
\newcommand{\ets}{\pred{encryptedTimeStamps}}
\newcommand{\cS}{\ensuremath{\mathcal{S}}\xspace}

\newcommand{\CPLA}{\ensuremath{\mathsf{IND\mbox{-}CPLA^{DET}}}\xspace}
\newcommand{\CPLAKH}{\ensuremath{\mathsf{IND\mbox{-}CPLA^{KH}}}\xspace}

\newcommand{\SQECPLA}{\ensuremath{\mathsf{IND\mbox{-}SQECPLA}}\xspace}
\newcommand{\ECPLA}{\ensuremath{\mathsf{IND\mbox{-}ECPLA}}\xspace}
\newcommand{\OCPA}{\ensuremath{\mathsf{IND\mbox{-}OCPA}}\xspace}
\newcommand{\DOCPA}{\ensuremath{\mathsf{IND\mbox{-}CDDA}}\xspace}
\newcommand{\LOR}{\ensuremath{\mathsf{LoR}}\xspace}
\newcommand{\DOPE}{\ensuremath{\mathsf{\mathbf{mOPED}}}\xspace}
\newcommand{\MOPE}{\ensuremath{\mathsf{\mathbf{mOPE}}}\xspace}
\newcommand{\cl}{\ensuremath{\mathsf{Cl}}\xspace}
\newcommand{\adv}{\ensuremath{\mathsf{Adv}}\xspace}
\newcommand{\sk}{\ensuremath{\mathsf{SK}}\xspace}
\newcommand{\param}{\ensuremath{\kappa}\xspace}
\newcommand{\server}{\ensuremath{\pred{Ser}}\xspace}
\newcommand{\encrypt}{\ensuremath{\pred{ENCRYPT}}\xspace}
\newcommand{\decrypt}{\ensuremath{\pred{DECRYPT}}\xspace}
\newcommand{\keygen}{\ensuremath{\mathsf{KeyGen}}\xspace}
\newcommand{\deter}{\ensuremath{\mathsf{DET}}\xspace}

\newcommand{\CS}{\ensuremath{\mathsf{CS}}\xspace}

\newcommand{\sigin}{\ensuremath{\sigma_{\textit{in}}}\xspace}
\newcommand{\Sigout}{\ensuremath{\Sigma_{\textit{out}}}\xspace}
\newcommand{\JJoin}{\hbox{{\Large$\bowtie$}}\xspace}

\newcommand{\planguage}{\ensuremath{\mathbf{\mathcal{P}_{\mathcal{L}}}}\xspace}

\newcommand{\interval}{\ensuremath{\mathbb{I}}\xspace}

\newcommand{\cL}{\ensuremath{\mathcal{L}}\xspace}
\newcommand{\cE}{\ensuremath{\mathcal{E}}\xspace}
\newcommand{\cD}{\ensuremath{\mathcal{D}}\xspace}
\newcommand{\cF}{\ensuremath{\mathcal{F}}\xspace}

\newcommand{\policy}{\ensuremath{\varphi}\xspace}

\newcommand{\identity}{\ensuremath{\mathbf{\bullet}}\xspace}




\newcommand{\send}{\ensuremath{\mathrm{send}}\xspace}
\newcommand{\inrole}{\ensuremath{\mathrm{inrole}}\xspace}
\newcommand{\contains}{\ensuremath{\mathrm{contains}}\xspace}
\newcommand{\psychotherapynotes}{\ensuremath{\mathit{psych\mbox{-}notes}}\xspace}
\newcommand{\PHI}{\ensuremath{\mathit{PHI}}\xspace}
\newcommand{\coveredentity}{\ensuremath{\mathit{covered\mbox{-}entity}}\xspace}
\newcommand{\individual}{\ensuremath{\mathit{individual}}\xspace}


\newcommand{\bbbt}{\ensuremath{\mathbb{T}}\xspace}
\newcommand{\bbbr}{\ensuremath{\mathbb{R}}\xspace}
\newtheorem{proposition}{Proposition}
\newtheorem{claim}{Claim} 
%
%

%
%

\newcommand{\pred}[1]{\ensuremath{\mathsf{#1}}\xspace}

\newcommand{\db}{\ensuremath{\mathsf{DB}}\xspace}
\newcommand{\dbd}[1]{\ensuremath{\mathsf{DB}_{#1}}\xspace}
\newcommand{\dbu}[1]{\ensuremath{\mathsf{DB}^{#1}}\xspace}
\newcommand{\dbud}[2]{\ensuremath{\mathsf{DB}_{#1}^{#2}}\xspace}

\newcommand{\tbl}{\ensuremath{\mathsf{table}}\xspace}
\newcommand{\tbld}[1]{\ensuremath{\mathsf{table}_{#1}}\xspace}
\newcommand{\tblu}[1]{\ensuremath{\mathsf{table}^{#1}}\xspace}
\newcommand{\tblud}[2]{\ensuremath{\mathsf{table}_{#1}^{#2}}\xspace}

\newcommand{\sch}{\ensuremath{\mathsf{Tschema}}\xspace}
\newcommand{\schd}[1]{\ensuremath{\mathsf{Tschema}_{#1}}\xspace}
\newcommand{\schu}[1]{\ensuremath{\mathsf{Tschema}^{#1}}\xspace}
\newcommand{\schud}[2]{\ensuremath{\mathsf{Tschema}_{#1}^{#2}}\xspace}

\newcommand{\rw}{\ensuremath{\mathsf{row}}\xspace}
\newcommand{\rwd}[1]{\ensuremath{\mathsf{row}_{#1}}\xspace}
\newcommand{\rwu}[1]{\ensuremath{\mathsf{row}^{#1}}\xspace}
\newcommand{\rwud}[2]{\ensuremath{\mathsf{row}_{#1}^{#2}}\xspace}

\newcommand{\elem}{\ensuremath{\mathsf{element}}\xspace}
\newcommand{\elemd}[1]{\ensuremath{\mathsf{element}_{#1}}\xspace}
\newcommand{\elemu}[1]{\ensuremath{\mathsf{element}^{#1}}\xspace}
\newcommand{\elemud}[2]{\ensuremath{\mathsf{element}_{#1}^{#2}}\xspace}

\newcommand{\gtoc}[1]{\ensuremath{\mathsf{get\_time\_offset\_consts}(#1)}}
\newcommand{\gc}[1]{\ensuremath{\mathsf{get\_consts}(#1)}}
\newcommand{\gv}[1]{\ensuremath{\mathsf{get\_vars}(#1)}}
\newcommand{\struct}[1]{\ensuremath{\mathcal{ST}(#1)}}

\newcommand{\sathat}{\ensuremath{\widehat{\pred{esat}}}\xspace}
\newcommand{\reduce}{\ensuremath{\pred{\mathbf{reduce}}}\xspace}
\newcommand{\st}[3]{\ensuremath{\langle#1, #2, #3\rangle}\xspace}
\newcommand{\sts}[4]{\ensuremath{\langle#1, #2, #3, #4\rangle}\xspace}
\newcommand{\act}[2]{\ensuremath{\overset{\langle#1, #2\rangle}{\Longrightarrow}}\xspace}
\newcommand{\acts}[2]{\ensuremath{\overset{\langle#1, #2\rangle}{\Longrightarrow\hspace{-0.05in}*}}\xspace}
\newcommand{\Act}[2]{\ensuremath{\overset{\langle#1, #2\rangle}{ \rightsquigarrow}}\xspace}
\newcommand{\Acts}[2]{\ensuremath{\overset{\langle#1, #2\rangle}{ \rightsquigarrow\hspace{-0.05in}*}}\xspace}
\newcommand{\ir}{\ensuremath{\mathbb{I}}\xspace}

\newenvironment{packedenumerate}{\vspace{-5pt}
\begin{enumerate}
\setlength{\topsep}{0pt}
\setlength{\itemsep}{0pt}
\setlength{\partopsep}{0pt}
}{\end{enumerate}}

\newenvironment{packeditemize}{\vspace{-5pt}
\begin{itemize}
\setlength{\topsep}{0pt}
\setlength{\itemsep}{0pt}
\setlength{\partopsep}{0pt}
}{\end{itemize}}

\begin{abstract}
Motivated by the problem of simultaneously preserving confidentiality
and usability of data outsourced to third-party clouds, we present two
different database encryption schemes that largely hide data but
reveal enough information to support a wide-range of relational
queries. We provide a security definition for database encryption that
captures confidentiality based on a notion of equivalence of databases
from the adversary's perspective. As a specific application, we adapt
an existing algorithm for finding violations of privacy policies to
run on logs encrypted under our schemes and observe low to moderate
overheads.
\end{abstract}




\section{Introduction}
\label{section:introduction}
To reduce infrastructure costs, small- and medium-sized businesses may
outsource their databases and database applications to third-party
clouds. However, proprietary data is often private, so storing it in a
cloud raises confidentiality concerns. Client-side encryption of
databases prior to oursourcing alleviates confidentiality concerns,
but it also makes it impossible to run any relational queries on the
outsourced databases. Several prior research projects have
investigated encryption schemes that trade-off perfect data
confidentiality for the ability to run relational
queries~\cite{SWP00,BS11,HILM02}. However, these schemes either
require client-side processing~\cite{HILM02}, or require additional
hardware support~\cite{BS11}, or support a very restrictive set of
queries~\cite{SWP00}. Our long-term goal is to develop database
encryption schemes that can (1) run on commodity off-the-shelf (COTS)
cloud infrastructure without any special hardware or any kernel
modifications, (2) support a broad range of relational queries on the
encrypted database without interaction with the client, and (3)
provide provable end-to-end security and a precise characterization of
what information encryption leaks for a given set of queries. Both in
objective and in method, our goal is similar to that of
CryptDB~\cite{cryptDB}, which attains properties (1) and (2), but not
(3).

As a step towards our goal, in this paper, we design two database
encryption schemes, \dt and \kh, with properties (1), (2) and (3). Our
design is guided by, and partly specific to, a single application
domain, namely, audit of data-use logs for violations of privacy
policies. This application represents a real-world problem.
For example, in the US, the healthcare and finance industry must
handle client data in accordance with the privacy portions of the
federal acts Health Insurance Portability and Accountability Act
(HIPAA) \cite{HIPAA} and Gramm-Leach-Bliley Act (GLBA) \cite{GLBA}
respectively, in addition to state legislation. To remain compliant
with privacy legislation, organizations record logs of
privacy-relevant day-to-day operations such as data access/use and
employee role changes, and audit these logs for violations of privacy
policies, either routinely or on a case-by-case basis.  Logs can get
fairly large and are often organized in commodity databases. Further,
audit is computationally expensive but highly parallelizable, so there
is significant merit in outsourcing the storage of logs and the
execution of audit algorithms to third-party clouds.

\vspace{5mm}
\noindent\textbf{Generality.}
Audit is a challenging application for encryption schemes because
audit requires almost all relational query operations on logs. These
operations include selection, projection, join, comparison of fields,
and what we call
\emph{displaced comparison} (is the difference between two fields less than a
given constant?). Both our encryption schemes support all these query
operations. The only standard query operation not commonly required by
privacy audit (and not supported by our schemes) is aggregation (sums
and averages; counting queries are supported). Any application that
requires only the query operations listed above can be adapted to run
on \dt or \kh, even though this paper focuses on the audit application
only.

{\dt} and \kh trade efficiency and flexibility in supported queries
differently. \dt uses deterministic encryption and has very low
overhead (3\% to 9\% over a no encryption baseline in our audit
application), but requires anticipating prior to encryption which
pairs of columns will be join-ed in audit queries. \kh uses
Popa \etal's adjustable key hash scheme~\cite{ADJJOIN,cryptDB} for
equality tests and has higher overhead (63\% to 406\% in our audit
application), but removes the need to anticipate join-ed columns
ahead-of-time.

\vspace{5mm}
\noindent\textbf{Security.}
We characterize formally what information about an encrypted log
(database) our schemes may leak to a PPT adversary with direct access
to the encrypted store (modeling a completely adversarial cloud). We
prove that by looking at a log encrypted with either of our schemes,
an adversary can learn (with non-negligible probability) only that the
plaintext log lies within a certain, precisely
defined \emph{equivalence class of logs}.  This class of logs
characterizes the uncertainty of the adversary and, hence, the
confidentiality of the encrypted log~\cite{askarov07}. Prior work like
CryptDB lacks such a theorem. CryptDB uses a trusted proxy server to
dynamically choose the most secure encryption scheme for every
database column (from a pre-determined set of schemes), based on the
queries being run on that column. While each scheme is known to be
secure in isolation and it is shown that at any time, a column is
encrypted with the weakest scheme that supports all past queries on
the column~\cite[Theorem 2]{...}, there is no end-to-end
characterization of information leaked after a sequence of
queries. (In return, CryptDB supports all SQL-queries, including
aggregation queries, which we do not.)

\vspace{5mm}
\noindent\textbf{Functionality.}
To demonstrate that our encryption schemes support nontrivial
applications, we adapt an audit algorithm called \reduce from our
prior~\cite{GJD11} to run on logs encrypted with either \dt or \kh. We
implement and test the adapted algorithm, \ereduce, on both schemes
and show formally that it is functionally correct
on \dt. Since \ereduce exercises all the log-query operations listed
above, this is strong evidence that our schemes support those
operations correctly. To run \ereduce on \dt (\kh) we need to know
prior to encryption (audit) which columns in the log will be compared
for equality or joined. To this end, we develop a new static analysis
of policies, which we call the EQ mode check.

Privacy audit often requires comparisons of the form ``is timestamp t1
within 30 days of timestamp t2''? We call such comparisons ``displaced
comparisons'' (30 days is called the ``displacement'').  To support
displaced comparisons, we design and prove the security of a new
cryptographic sub-scheme dubbed \DOPE (mutable order-preserving
encoding with displacement). This scheme extends the \MOPE scheme of
Popa \etal~\cite{mope}, which does not support displacements, and may
be of independent interest.

\vspace{5mm}
\noindent\textbf{Deployability.}
Both \dt and \kh can be deployed on commodity cloud database systems
with some additional metadata. In both schemes, the client encrypts
individual data cells locally and stores the ciphertexts in a
commodity database system in the cloud (possibly incrementally). Audit
runs on the cloud without interaction with the client and returns
encrypted results to the client, which can decrypt them. The schemes
reveal enough information about the data to perform all supported
operations, e.g., compare two data values for equality to perform
equi-joins.

\vspace{5mm}
\noindent\textbf{Contributions.}
We make the following technical contributions:
\begin{packeditemize}\setlength{\itemsep}{-2pt}
\item
We introduce two database encryption schemes, \dt and \kh, that
support selection, projection, join, comparison of fields, and
displaced comparison queries. The schemes trade efficiency for the
need to predict expected pairs of join-ed columns before
encryption. As a building block, we develop the sub-scheme \DOPE, that
allows displaced comparison of encrypted values.
\item
We characterize confidentiality preserved by our schemes as
equivalence classes of plaintext logs and prove that both our schemes
are end-to-end secure.
\item
We adapt an existing privacy policy audit algorithm to execute on our
schemes. We prove the functional correctness of the execution of our
algorithm on \dt.
\item
We implement both our schemes and the adapted audit algorithm,
observing low overheads on \dt and moderate overheads on \kh.
\end{packeditemize}


\vspace{5mm}
\noindent\textbf{Notation.}
This paper is written in the context of the privacy audit application
and our encryption schemes are presented within this context.
Accordingly, we sometimes use the term ``log'' or ``audit log'' when
the more general term ``database'' could have been used and,
similarly, use the term ``policy'' or ``privacy policy'' when the more
general term ``query'' could have been used. We expect our schemes to
generalize to other database applications straightforwardly.

\section{Overview of Eunomia}
\label{section:motivation}

We first present the architecture of \framework.  Then, we motivate
our choice of encryption schemes through examples and discuss
policy audit in \framework in more detail.  Finally, we discuss our
goals, assumptions, and adversary model.

\subsection{Architecture of Eunomia}
We consider the scenario where an organization, called the client
or \cl, with sensitive data and audit requirements wishes to outsource
its log (organized as a relational database) and audit process
(implemented as a sequence of policy-dependent queries) to a
potentially compromisable third-party cloud server, denoted \CS.
\cl generates the log
from its day-to-day operations. \cl then encrypts the log and
transfers the encrypted log to the \CS.
\cl initiates the audit process by choosing a policy. The auditing algorithm
 runs on the \CS infrastructure and the audit result containing
 encrypted values is sent to \cl, which can decrypt the values in the
 result.

The mechanism of log generation is irrelevant for us. From our
perspective, a log is a pre-generated database with a public schema,
where each table defines a certain privacy-relevant predicate. For
example, the table \pred{Roles} may contain columns \pred{Name} and
\pred{Role}, and may define the mapping of \cl's employees to
\cl's organizational roles. Similarly, the table \pred{Sensitive\_accesses}
may contain columns \pred{Name}, \pred{File\_name}, and \pred{Time},
recording who accessed which sensitive file at what time.  Several
tables like \pred{Sensitive\_accesses} may contain columns with
timestamps, which are integers.

\subsection{Encryption Schemes}
%
An organization may na\"{i}vely encrypt the entire log with a strong
encryption scheme before transferring it to a cloud, but this renders
the stored log ineffective for audit, as audit (like most other
database computations) must \emph{relate} different parts of the
log. For example, suppose the log contains two tables $T_1$ and $T_2$.
$T_1$ lists the names of individuals who accessed patients'
prescriptions. $T_2$ lists the roles of all individuals in the
organization. Consider the privacy policy:

\vspace*{5pt}\noindent\textbf{Policy 1:} \textit{Every individual accessing patients' prescriptions must
be in the role of Doctor}.\vspace*{2pt}

%
The audit process of the above policy must read names from $T_1$ and
test them for equality against the list of names under the role Doctor
in $T_2$. This forces the use of an encryption method that allows
equality tests (or equi-joins). Unsurprisingly, this compromises the
confidentiality of the log, as an adversary (e.g., the cloud host,
which observes the audit process) can detect equality between
encrypted fields (e.g., equality of names in $T_1$ and $T_2$).
However, not all is lost: for instance, if per-cell deterministic
encryption is used, the adversary cannot learn the concrete names
themselves.

A second form of data correlation necessary for audit is the order
between time points. Consider the following policy:

\vspace*{5pt}\noindent\textbf{Policy 2:} \textit{If an outpatient's medical record is accessed by an
employee of the Billing Department, then the outpatient must have
visited the medical facility in the last one month}.\vspace*{2pt}

Auditing this policy requires checking whether the distance between
the timestamps in an access table and the timestamps in a patient
visit table is shorter than a month. In this case, the encryption
scheme must reveal not just the relative order of two timestamps but
also the order between a timestamp and another timestamp displaced by
one month.  Similar to Policy 1, the encryption scheme must reveal
equality between patient names in the two
tables. 

To strike a balance between functional (audit) and confidentiality
requirements, we
investigate two cryptographic schemes, namely \dt and \kh, to encrypt
logs.
Each cell in the database tables is encrypted
individually.  All cells in a column are encrypted using the same key.
\dt uses deterministic encryption to support equality tests; two
columns that might be tested for equality by subsequent queries are
encrypted with the same key.  \dt requires that log columns that might
be tested for equality during audit are known prior to the encryption.
Audit under \dt is quite efficient. However, adapting encrypted logs
to audit different policies that require different column equality
tests requires log re-encryption, which is costly.
Our second scheme \kh handles frequent policy updates efficiently.
\kh relies on the adjustable key hash
scheme~\cite{ADJJOIN,cryptDB} for equality tests. A transfer token is
generated for each pair of columns needed to be tested for equality
prior to audit.
\kh additionally stores keyed hashes of all cells.
 Audit under \kh requires the audit algorithm to track the provenance
of the ciphertext (i.e., from which table, which column the ciphertext
originated) and is less efficient than audit under \dt.
%

To support timestamp comparison with displacement (shown in Policy 2)
the \DOPE scheme, described in Section~\ref{section:functions}, is
used by both \dt and \kh.  Like its predecessor, \MOPE~\cite{mope},
the scheme adds an additional search tree (additional metadata) to the
encrypted database on \CS. (Supporting displacements is necessary for
a practical audit system because privacy regulations use displacements
to express obligation deadlines.  Out of 84 HIPAA privacy clauses, 7
use displacements. Cignet Health of Prince George's County, Maryland
was fined \$1.3 million for violating one of these
clauses, \S164.524.)

The encrypted database has a schema derived from the schema of the
plaintext database and may be stored on \CS using any standard
database management systems (DBMS). The DBMS may be used to index the
encrypted cells.  As shown in~\cite{GJD11}, database indexing plays a
key role in improving the efficiency of the audit process.  Hence, we
develop our encryption scheme in such a way that it is possible to
leverage database indexing supported by commodity DBMS.




\subsection{Policies and audit}

Privacy policies may be
extracted from privacy legislation like HIPAA \cite{HIPAA} and GLBA \cite{GLBA},
or from internal company
requirements. Technically, a privacy policy specifies a
\emph{constraint} on the log. For example, Policy~1 of
Section~\ref{section:introduction}
requires that any name appearing in table $T_1$ appear in table $T_2$
with role Doctor. Generally, policies can be complex and may mention
various entities, roles, time, and subjective beliefs. For instance,
DeYoung \etal's formalization of the HIPAA and GLBA Privacy Rules span
over 80 and 10 pages, respectively~\cite{DGJKD10}.
%
%
We represent policies as formulas of first-order logic (FOL) because we find
it technically convenient and because FOL has been
demonstrated in prior work to be adequate for representing policies
derived from existing privacy legislation (DeYoung \etal, mentioned
above, use the same representation). We describe this logic-based
representation of policies in Section~\ref{section:spec}.

Our audit algorithm adapts our prior algorithm, \reduce~\cite{GJD11},
that works on policies represented in FOL.  This algorithm takes as
input a policy and a log and \emph{reduces} the policy by checking the
policy constraints on the log. It outputs constraints that cannot be
checked due to lack of information (missing log tables, references to
future time points, or need for human intervention) in the form of
a \emph{residual policy}. Similar to \reduce, our adapted
algorithm, \ereduce, uses database queries as the basic building
block.  Our encryption schemes permit queries with selection,
projection, join, comparison and displaced comparison operations.  Our
schemes do not support queries like aggregation (which would require
an underlying homomorphic encryption scheme and completely new
security proofs).

To run \reduce on \dt, we need to identify columns that are tested for
equality.  This information is needed prior to encryption for \dt and
prior to audit for \kh, as explained in
Section~\ref{section:functions}. We develop a static analysis of
policies represented in FOL, which we call the \emph{EQ mode check},
defined in Section~\ref{section:modeChecking}, to determine which
columns may need to be compared for equality when the policy is
audited.


\subsection{Adversary model and Security Goals}
\label{sec:log-equiv}

\vspace{5pt}\noindent \textbf{Assumptions and threat model.}
In our threat model, \cl is trusted but \CS is an \emph{honest but
 curious} adversary with the following capability: \CS can run any
 polynomial time algorithm on the stored (encrypted) log, including
 the audit over any policy.
We assume that \cl
generates keys and encrypts the log with our encryption schemes before
uploading it to \CS. Audit runs on the \CS infrastructure
but (by design) it does not perform decryption. Hence, \CS never sees
plaintext data or the keys, but \CS can glean some information about
the log, e.g., the order of two fields or the equality of two
fields. The output of audit may contain encrypted values indicating
policy violations, but these values are decrypted only at \cl.

We assume that privacy policies are known to the adversary.  This
assumption may not be true for an organization's internal policies,
but relaxing this assumption only simplifies our formal development.
To audit over logs encrypted with \dt, any constants appearing in the
policy (like ``Doctor'' in Policy~1 of
Section~\ref{section:introduction}) must be encrypted before the audit
process starts, so \CS can recover the association between ciphertext
and plaintexts of constants that appear in the (publicly known)
privacy policy.  Similarly, in \kh, the hashes of constants in
policies must be revealed to the adversary. 
in a set

\vspace{5pt}\noindent\textbf{Security and functionality goals.}
(Confidentiality) Our primary goal is to protect the confidentiality
of the log's content, despite any compromise of \CS, including its
infrastructure, employees, and the audit process running on
it.
(Expressiveness) Our system should be expressive enough to represent
and audit privacy policies derived from real legislation. In our
evaluation, we work with privacy rules derived from HIPAA and GLBA.

\vspace{5pt}\noindent\textbf{Log equivalence.} Central to the
definition of the end-to-end security property that we prove of our
\dt and \kh is the notion of log equivalence. It characterizes
what information about the database \emph{remains confidential}
despite a complete compromise of \CS.  Our security definition states
that the adversary can only learn that the log belongs to a stipulated
equivalence class of logs. The coarser our equivalence, the stronger
our security theorem.

For semantically secure encryption, we could say that two logs are
equivalent if they are the same length. When the encryption permits
join, selection, comparison and dispaced comparison queries, this
definition is too strong. For example, the attacker must be allowed to
learn that two constants on the log (e.g., Doctor and Nurse) are not
equal if they lie in different columns that the attacker can try to
join. Hence, we need a refined notion of log equivalence, which we
formalize in Section~\ref{section:security:det}.

 \section{Policy and Log Specifications}
 \label{section:spec}
 We review the logic that we use to represent privacy policies and give
a formal definition of logs (databases). These definitions are later
used in the definition and analysis of our encryption schemes and
the \ereduce audit algorithm.

\vspace{5pt}\noindent\textbf{Policy logic.}
We use
the guarded-fragment of first-order
logic introduced in~\cite{ANB98} to represent privacy policies.
The syntax of the logic is shown
in Figure~\ref{fig:spec}. Policies or formulas are denoted \policy.
\begin{figure}[t]
\centering
\(
\begin{array}{cccl}
 \pred{Atoms} & \mathcal{P} & ::= &  \pred{p}(t_1,\ldots,t_n)\mid \pred{timeOrder}(t_1, d_1, t_2, d_2) \mid \\
 & & & t_1 = t_2 \\
 \pred{Guard} & g & ::= & \mathcal{P}\mid\top\mid\bot\mid g_1\wedge g_2\mid g_1\vee g_2\mid\exists x.g\\
 \pred{Formula} & \policy & ::= & \mathcal{P}\mid\top\mid\bot\mid\policy_1\wedge\policy_2\mid\policy_1\vee\policy_2\mid\\
 & & & \forall\vec{x}.(g\rightarrow\policy)\mid\exists\vec{x}.(g\wedge\policy)
\end{array}
\)
\vspace*{-10pt}
\caption{Policy specification logic syntax\label{fig:spec}}
\end{figure}
Terms $t$ are either constants $c,d$ drawn from a domain \cD or
variables $x$ drawn from a set \var. (Function symbols are
disallowed.) $\vec{t}$ denotes a list of terms. The basic building
block of formulas is \emph{atoms}, which represent relations between
terms. We allow three kinds of atoms. First,
$\pred{p}(t_1,\ldots,t_n)$ represents a relation which is established
through a table named \pred{p} in the audit log. The symbol \pred{p}
is called a predicate (or, interchangeably, a table). The set of all
predicate symbols is denoted by $\mathbb{P}$. An arity function
$\alpha: \mathbb{P} \rightarrow \mathbb{N}$ specifies how many
arguments each predicate takes (i.e., how many columns each table
has).  Second, for numerical terms, we allow comparison after
displacement with constants, written $\pred{timeOrder}(t_1, d_1, t_2,
d_2)$. This relation means that $t_1 + d_1 \leq t_2 + d_2$. Here,
$d_1, d_2$ must be constants.  Third, we allow term equality, written
$t_1 = t_2$.  Although we restrict atoms of the logic to these three
categories only, the resulting fragment is still very expressive. All
the HIPAA- and GLBA-based policies tested in prior work~\cite{GJD11}
and all but one clause of the entire HIPAA and GLBA privacy rules
formalized by DeYoung \etal~\cite{DGJKD10} lie within this fragment.

Formulas or policies, denoted \policy, contain standard logical
connectives $\top$ (``true''), $\bot$ (``false''), $\wedge$ (``and''),
$\vee$ (``or''), $\forall x$ (``for every $x$'') and $\exists x$
(``for some x''). Saliently, the form of quantifiers $\forall x$ and
$\exists x$ is restricted: Each quantifier must include
a \emph{guard}, $g$. As shown in~\cite{GJD11}, this restriction,
together with the mode check described in
Section~\ref{section:modeChecking}, ensures that audit terminates (in
general, the domain \cD may be infinite).
Intuitively, one may think of a policy $\policy$ as enforcing a
constraint on the predicates it mentions, i.e., on the tables of the
log. A guard $g$ may be thought of as a query on the log (indeed, the
syntax of guards generalizes Datalog, a well-known database query
language). The policy $\forall \vec{x}. (g \rightarrow \varphi)$ may
be read as ``for every result $\vec{x}$ of the query $g$, the
constraint $\varphi$ must hold.'' Dually,
$\exists \vec{x}. (g \wedge \varphi)$ may be read as ``some result
$\vec{x}$ of the query $g$ must satisfy the constraint $\varphi$.''

\vspace{5mm}
\noindent 
\textbf{Example 1.} 
Consider the following policy, based on \S6802(a) of the GLBA privacy
law: 

\noindent 
\[
\begin{array}{l}
\forall p_1, p_2, m, q, a, t.\quad(\pred{send}(p_1, p_2, m, t)\wedge\\ 
\pred{tagged}(m, q, a)\wedge \pred{activeRole}(p_1,\mathit{institution})\wedge\\
\pred{notAffiliateOf}(p_2, p_1, t)\wedge\pred{customerOf}(q, p_1, t)\wedge
\pred{attr}(a, \mathit{npi}))\\
\hspace*{1em}\rightarrow\bigg((\exists t_1, m_1.\pred{send}(p_1, q, m_1, t_1)\wedge\pred{timeOrder}(t_1,0,t,0)\wedge\\
\hspace*{2em}
\pred{timeOrder}(t, 0, t_1, 30)\wedge
\pred{discNotice}(m_1, p_1, p_2, q, a,t))\\
\hspace*{10em}
\bigvee\\
\hspace*{1em}(\exists t_2, m_2.\pred{send}(p_1, q, m_2, t_2)\wedge\pred{timeOrder}(t,0,t_2,0)\wedge\\
\hspace*{2em}\pred{timeOrder}(t_2, 0, t, 30)\wedge\pred{discNotice}(m_2, p_1, p_2, q, a,t))\bigg)
\end{array}
\]
The policy states that principal $p_1$ can \emph{send} a message $m$
to principal $p_2$ at time $t$ where the message $m$ contains
principal $q$'s attribute $a$ (\eg, account number) and (i) $p_1$ is
in the role of a financial institution, (ii) $p_2$ is \emph{not} a
third-party affiliate of $p_1$ at time $t$, (iii) $q$ is a customer of
$p_1$ at time $t$, (iv) the attribute $a$ is non-public personal
information ($npi$, \eg, a social security number) only if any one of
the two conditions separated by $\vee$ holds. The first condition says
that the institution has already sent a notification of this
disclosure in the past 30 days to the customer $q$ (\ie, $0\leq
(t-t_1)\leq 30$).  The second condition says that the institution will
send a notification of this disclosure within the next 30 days (\ie,
$0\leq (t_2-t)\leq 30$).

\vspace{5pt}\noindent \textbf{Logs and schemas.}
An audit log or log, denoted \cL, is a database with a given schema. A
schema \cS is a set of pairs of the form
$\langle \pred{tableName}, \pred{columnNames}\rangle$
where \pred{columnNames} is an ordered list of all the column names in
the table (predicate) \pred{tableName}. A schema \cS corresponds to a
policy \policy if \cS contains all predicates mentioned in the
policies \policy, and the number of columns in predicate \pred{p} is
$\alpha(\pred{p})$.

Semantically, we may view a log \cL as a function that given as argument
a variable-free atom $\pred{p}(\vec{t})$ returns either $\top$ (the
entry $\vec{t}$ exists in table \pred{p} in \cL) or $\bot$ (the entry
does not exist). To model the possibility that a log table may be
incomplete, we allow for a third possible response \pred{uu}
(unknown). In our implementation, the difference between \pred{uu} and
$\bot$ arises from an additional bit on the table \pred{p} indicating
whether or not the table may be extended in future.  Formally, we say
that log $\cL_1$ extends log $\cL_2$, written $\cL_1 \geq \cL_2$ when
for every $\pred{p}$ and $\vec{t}$, if
$\cL_2(\pred{p}(\vec{t})) \not= \pred{uu}$, then
$\cL_1(\pred{p}(\vec{t})) = \cL_2(\pred{p}(\vec{t}))$. Thus, the
extended log $\cL_1$ may determinize some unknown entries from
$\cL_2$, but cannot change existing entries in $\cL_2$.

Our logic uses standard semantics of first-order logic, treating logs
as models. The semantics, written $\cL \models \varphi$, take into
account the possibility of unknown relations; we refer the reader
to~\cite{GJD11} for details (these details are not important for
understanding this paper). Intuitively, if $\cL \models \varphi$, then
the policy $\varphi$ is satisfied on the log $\cL$; if
$\cL \not \models \varphi$, then the policy is violated; and if
neither holds then the log does not have enough information to
determine whether or not the policy has been violated.

\vspace{5mm}
\noindent \textbf{Example 2.} 
The policy in Example 1 can be checked for violations on a log whose
schema contains tables $\pred{send}$, $\pred{tagged}$,
$\pred{activeRole}$, $\pred{notAffiliateOf}$, $\pred{customerOf}$,
$\pred{attr}$ and $\pred{discNotice}$ with 4, 3, 2, 3, 3, 2 and 6
columns respectively. In this audit, values in several columns may
have to be compared for equality. For example, the values in the first
columns of tables $\pred{send}$ and $\pred{activeRole}$ must be
compared because, in the policy, they contain the same variable
$p_1$. Similarly, timestamps must be compared after displacement with
constants $0$ and $30$. The log encryption schemes we define next
support these operations.

 \section{Encryption Schemes}
 \label{section:functions}

We present our two log encryption schemes, \dt and \kh in
Section~\ref{section:det} and Section~\ref{section:kh}
respectively. Both schemes use (as a black-box) a new sub-scheme,
\DOPE, for comparing timestamps after displacement, which we present
in Section~\ref{section:dope}.

\subsection{Preliminaries}
We introduce common constructs used through out the rest of this
section.

\vspace{5pt}
\noindent\textbf{Equality scheme.}
To support policy audit, we determine, through
 a static analysis of the policies to be audited, which pairs of
 columns in the log schema may have be tested for equality or joined.
We defer the details of this
policy analysis to Section~\ref{section:modeChecking}. For now, we
just assume that the result of this analysis is available. This
result, called an \emph{equality scheme}, denoted \escheme, is a set of
pairs of the form $\langle \pred{p}_1.\pred{a_1},
\pred{p}_2.\pred{a_2}\rangle$. The key property of \escheme is that
if, during audit, column \pred{a_1} of table \pred{p_1} is tested for
equality against column \pred{a_2} of table \pred{p_2}, then $\langle
\pred{p}_1.\pred{a_1}, \pred{p}_2.\pred{a_2}\rangle \in
\escheme$.

\vspace{5pt}
\noindent\textbf{Policy constants.}
Policies may contain constants. For instance, the policy of Example 1
contains the constants $npi$, $institution$, $0$ and $30$.
Before running our audit algorithm over encrypted logs, a new
version of the policy containing these constants in either encrypted
(for \dt) or keyed hash (for \kh) form must be created.
Consequently, the
adversary, who observes the audit and knows the plaintext policy, can
learn the encryption or hash of these constants. Hence, these
constants play an important role in our security definitions. The set
of all these policy constants is denoted $C$.

\vspace{5pt}
\noindent\textbf{Displacement constants.}
Constants which feature in the 2nd and 4th argument positions
of the predicate $\pred{timeOrder}()$ 
play a significant
role in construction of the \DOPE encoding and our security
definition. These constants are called \emph{displacements}, denoted
$D$. For instance, in Example 1, $D = \{0, 30\}$. For any policy, $D
\subseteq C$. 

\vspace{5pt}
\noindent\textbf{Encrypting timestamps.}
We assume (conservatively) that all timestamps in the plaintext log
may be compared to each other, so all timestamps are encrypted
(in \dt) or hashed (in \kh) with the same key $K_\pred{time}$. This
key is also used to protect values in the \DOPE sub-scheme. The
assumption of all timestamps may be compared with each other, can be
restricted substantially (for both schemes) if the audit policy is
fixed ahead of time.

\subsection{Eunomia$^\mathbf{DET}$}\label{section:det}

The log encryption scheme \dt encrypts each cell individually using
deterministic encryption. All cells in a column are encrypted with the
same key. Importantly, if cells in two columns may be compared during
audit (as determined by the equality scheme \escheme), then the two
columns also share the same key. Hence, cells can be tested for
equality simply by comparing their ciphertexts. To allow timestamp
comparison after displacement, the encrypted log is paired with
a \DOPE encoding of timestamps that we explain later. Note that it is
possible to replace deterministic encryption with a cryptographically
secure keyed hash and a semantically secure ciphertext to achieve the
same functionality (the keyed hash value could be used to check for
equality). However, this design incurs higher space overhead than our
design with deterministic encryption.

Technically, \dt contains the following three algorithms:
$\pred{KeyGen}^\pred{DET}(1^\param,\cS,\escheme)$,
$\pred{EncryptLog}^\pred{DET}(\cL,\cS,\keys)$, and
$\pred{EncryptPolicyConstants}^\pred{DET}(\policy,\keys)$.

\vspace{5pt}
\noindent \textbf{Key generation.}  The probabilistic algorithm
$\pred{KeyGen}^\pred{DET}(\cdot,\\\cdot,\cdot)$ takes as input the
security parameter \param, the plaintext log schema \cS, and an equality
scheme \escheme. It returns a \emph{key set} \keys. The \emph{key set}
\keys is a set of triples of the form $\langle \pred{p}, \pred{a},
k\rangle$. The triple means that all cells in column $\pred{a}$ of
table $\pred{p}$ must be encrypted (deterministically) with key
$k$. The constraints on \keys are that (a) if \pred{p}.\pred{a}
contains timestamps, then $k = K_\pred{time}$, and (b) if $\langle
\pred{p}_1.\pred{a_1}, \pred{p}_2.\pred{a_2}\rangle \in \escheme$,
$\langle \pred{p_1}, \pred{a_1}, k_1 \rangle \in \keys$ and $\langle
\pred{p_2}, \pred{a_2}, k_2 \rangle \in \keys$, then $k_1 = k_2$.


\vspace{5pt}
\noindent
\textbf{Encrypting the log.} The algorithm
$\pred{EncryptLog}^\pred{DET}(\cdot,\cdot,\cdot)$ takes as input a
plaintext log \cL, its schema \cS, and the key set \keys generated by
$\pred{KeyGen}()$. It returns a pair $e\cL=\langle e\db, e\cT\rangle$
where, $e\db$ is the cell-wise encryption of \cL with appropriate keys
from \keys and $e\cT$ is the \DOPE encoding.



\vspace{5pt}
\noindent
\textbf{Encrypting constants in the policy.}  To audit over logs
encrypted with \dt, constants in the policy must be encrypted too
(else, we cannot check whether or not an atom mentioning the constant
appears in the encrypted log). The algorithm
$\pred{EncryptPolicyConstants}^\pred{DET}(\cdot,\cdot)$ takes as input
a plaintext policy \policy, and a key set \keys, and returns a policy
$\policy'$ in which constants have been encrypted with appropriate
keys. The function works as follows: If, in \policy, the constant $c$
appears in the $i$th position of predicate \pred{p}, then in
$\policy'$, the $i$th position of \pred{p} is $c$ deterministically
encrypted with the key of the $i$th column of \pred{p} (as obtained
from \keys). Other than this, $\policy$ and $\policy'$ are identical.


\vspace{5mm}
\noindent
\textbf{Remarks.}  The process of audit on a log encrypted with \dt
requires no cryptographic operations. Compared to an unencrypted log,
we only pay the overhead of having to compare longer ciphertexts and
some cost for looking up the \DOPE encoding to compare
timestamps. However, auditing for a policy that requires equality
tests beyond those prescribed by an equality scheme
\escheme is impossible on a log encrypted for \escheme. To do so, we
would have to re-encrypt parts of the log, which is a slow
operation. Our second log encryption scheme, \kh, represents a
different trade-off.




\subsection{Eunomia$^\mathbf{KH}$}\label{section:kh}

\kh relies on the adjustable keyed hash (AKH)
scheme~\cite{ADJJOIN,cryptDB} to support equality tests. We
review AKH and then describe how we build \kh on it.

Abstractly, AKH provides three functions:
 $\pred{Hash}(k,v)=\pred{P}\times\pred{DET}(k_\pred{master},v)\times
 k$ (\pred{P} is a point on an elliptic curve,
 $\pred{DET}(\cdot,\cdot)$ is the deterministic encryption function,
 and $k_\pred{master}$ is a master encryption key),
 $\pred{Token}(k_1,k_2)=k_2\times k_1^{-1}$ and
 $\pred{Adjust}(w,\Delta)=w\times\Delta$. $\pred{Hash}(k,v)$ returns a
 keyed hash of $v$ with key $k$ on a pre-determined elliptic curve
 with public parameters. $\pred{Token}(k_1,k_2)$ returns a token
 $\Delta_{k_1 \mapsto k_2}$, which allows \emph{transforming} hashes
 created with key $k_1$ to corresponding hashes created with
 $k_2$. The function $\pred{Adjust}(w,\Delta)$ performs this
 transformation: If $w = \pred{Hash}(k_1,v)$ and $\Delta
 = \Delta_{k_1 \mapsto k_2}$, then $\pred{Adjust}(w, \Delta)$ returns
 the same value as $\pred{Hash}(k_2,v)$. The AKH scheme allows the
 adversary to compare two values hashed with keys $k_1$ and $k_2$ for
 equality only when it knows either $\Delta_{k_1 \mapsto k_2}$ or
 $\Delta_{k_2 \mapsto k_1}$. Popa \etal prove this security property,
reducing it to the elliptic-curve decisional Diffie-Hellman
assumption~\cite{ADJJOIN}.

To encrypt a log in \kh, we generate two keys $k_h, k_e$ for each
column. These are called the hash key and the encryption key,
respectively. Each cell $v$ in the column is transformed into a pair
$\langle \pred{Hash}(k_h, v), \pred{Encrypt}(k_e, v) \rangle$. Here,
$\pred{Hash}(k_h,v)$ is the AKH hash of $v$ with key $k_h$ and
$\pred{Encrypt}(k_e, v)$ is a standard probabilistic encryption of $v$
with key $k_e$.%
\footnote{
The $\pred{Encrypt}(k_e, v)$ component of the ciphertext is returned
to the client \cl as part of the audit output. {\cl} then decrypts it
to obtain concrete policy violations.}
Columns do not share any keys. If audit on a policy requires testing
columns $\pred{t_1}.\pred{a_1}$ and $\pred{t_2}.\pred{a_2}$ for
equality and these columns have hash keys $k_{h_1}$ and $k_{h_2}$,
then the audit algorithm is given one of the tokens
$\Delta_{k_{h_1} \mapsto k_{h_2}}$ and $\Delta_{k_{h_2} \mapsto
k_{h_1}}$. The algorithm can then transform hashes to test for
equality. Each execution of the audit process can be given a different
set of tokens depending on the policy being audited and, hence, unlike
\dt, the same encrypted log supports audit over any policy. However,
equality testing is more expensive now as it invokes the
$\pred{Adjust}()$ function. This increases the runtime overhead of
audit.

Formally, \kh contains the following four algorithms:
$\pred{KeyGen}^\pred{KH}(1^\param,\cS)$,
$\pred{EncryptLog}^\pred{KH}(\cL,\cS,\keys)$,
$\pred{EncryptPolicyConsta}\\\pred{nts}^\pred{KH}(\policy,\keys)$, and
$\pred{GenerateToken}(\cS,\escheme, \keys)$.

\vspace{5mm}
\noindent\textbf{Key generation.} The probabilistic algorithm
$\pred{KeyGen}^\pred{KH}(\cdot,\cdot)$ takes as input a security
parameter and a log schema \cS and returns a key set \keys. \keys
contains tuples of the form $\langle \pred{p}, \pred{a}, k_h,
k_e\rangle$, meaning that column $\pred{p}.\pred{a}$ has hash key
$k_h$ and encryption key $k_e$. The only constraint is that if
$\pred{p}.\pred{a}$ contains timestamps, then $k_h = K_\pred{time}$.

\vspace{5mm}
\noindent\textbf{Encrypting the log.} The algorithm
$\pred{EncryptLog}^\pred{KH}(\cdot,\cdot,\cdot)$ takes as arguments a
plaintext log \cL, its schema \cS and a key set \keys. It returns a
pair $e\cL=\langle e\db, e\cT\rangle$ where, $e\db$ is the cell-wise
encryption of \cL with appropriate keys from \keys and $e\cT$ is
the \DOPE encoding. Because each cell maps to a pair, each table has
twice as many columns in $e\db$ as in~$\cL$.

\vspace{5mm}
\noindent\textbf{Encrypting policy constants.} To audit over \kh encrypted logs,
constants in the policy must be encrypted. The
algorithm $\pred{EncryptPolicyConstants}^\pred{KH}(\cdot,\cdot)$ takes
as input a plaintext policy \policy, and a key set \keys, and returns
a policy $\policy'$ in which constants have been encrypted with
appropriate keys taken from \keys: If constant $c$ appears in the
$i$th position of predicate \pred{p} in \policy and the hash and
encryption keys of the $i$th column of \pred{p} in \keys are $k_h$ and
$k_e$, respectively, then the constant $c$ is replaced by $\langle
\pred{Hash}(k_h,v), \pred{Encrypt}(k_e,v)\rangle$ in $\policy'$.

\vspace{5mm}
\noindent\textbf{Generating tokens.}
$\pred{GenerateToken}(\cdot,\cdot,\cdot)$ is used to generate tokens
that are given to the audit algorithm to enable it to test for
equality on the encrypted log. For each tuple $\langle
\pred{p}.\pred{a}_1, \pred{q}.\pred{a}_2\rangle$ in \escheme, the
algorithm $\pred{GenerateToken}(\cS,\escheme, \keys)$ returns the
tuple $\langle \pred{p}.\pred{a}_1, \pred{q}.\pred{a}_2,
\Delta_{k_1\mapsto k_2}\rangle$, where $\langle\pred{p}, \pred{a}_1,
k_1, \_\rangle\in\keys$ and $\langle\pred{q}, \pred{a}_2,
k_2,\_\rangle\in\keys$.

\vspace{5mm}
\noindent\textbf{Remark.}
From the perspective of confidentiality, the same amount of
information is revealed irrespective of whether the audit algorithm
(which may be compromised by the adversary) is given
$\Delta_{k_{h_1} \mapsto k_{h_2}}$ or $\Delta_{k_{h_2} \mapsto
k_{h_1}}$, because each token can be computed from the other. However,
the actual token used for comparison by the audit algorithm can have a
significant impact on its performance. Consider Policy 1 from
Section~\ref{section:introduction}, which stipulates that each name
appearing in table $T_1$ appear in $T_2$ with the role Doctor.  The
audit process will iterate over the names in $T_1$ and look up those
names in $T_2$. Consequently, for performance, it makes sense to index
the hashes of the names in $T_2$ and for the audit algorithm to use
the token $\Delta_{k_1 \mapsto k_2}$, where $k_1$ and $k_2$ are the
hash keys of names in $T_1$ and $T_2$, respectively. If, instead, the
algorithm uses $\Delta_{k_2 \mapsto k_1}$, then indexing is
ineffective and performance suffers. The bottom line is
that \emph{directionality} of information flow during equality testing
matters for \kh. Our policy analysis, which determines the columns
that may be tested for equality during audit
(Section~\ref{section:modeChecking}) takes this directionality into
account. The equality scheme \escheme returned by this analysis is
directional (even though the use of \escheme in \dt ignored
this directionality): if $\langle \pred{p}_1.\pred{a_1},
\pred{p}_2.\pred{a_2}\rangle \in \escheme$, and
$\pred{p}_1.\pred{a_1}$ and $\pred{p}_2.\pred{a_2}$ have hash keys
$k_1$ and $k_2$, then the audit algorithm uses the token
$\Delta_{k_1 \mapsto k_2}$, not $\Delta_{k_2 \mapsto k_1}$.

\subsection{Mutable Order Preserving Encoding with Displacements (mOPED)}
\label{section:dope}

We now discuss the \DOPE scheme which produces a data structure,
$e\cT$, that allows computation of the boolean value $t_1 + d_1 \leq
t_2 + d_2$ on the cloud, given only $\pred{Enc}(t_1)$,
$\pred{Enc}(t_2)$, $\pred{Enc}(d_1)$ and $\pred{Enc}(d_2)$. Here,
$\pred{Enc}(t)$ denotes the deterministic encryption of $t$ (in the
case of \dt) or the AHK hash of $t$ (in the case of \kh) with the
fixed key $K_\pred{time}$. The scheme \DOPE extends a prior
scheme \MOPE~\cite{mope}, which is a special case $d_1 = d_2 = 0$ of
our scheme.

Consider first the simple case where the log $\cL$ and the policy
\policy are fixed. This means that the set $T$ of values of the form
$t+d$ that the audit process may compare to each other is also fixed
and finite (because $t$ is a timestamp on the finite log $\cL$ and $d
\in D$ is a displacement occurring in the finite policy \policy). Suppose
that the set $T$ has size $N$ (note that $N \in O(|D| \cdot
|\cL|$). Then, the client can store on the cloud a map
$e\cT: \pred{EncTimeStamp} \times \pred{EncD} \rightarrow
\{1,\ldots,N\}$, which maps each encrypted timestamp
$\pred{Enc}(t)$ and each encrypted displacement
$\pred{Enc}(d)$ to the relative order of $t+d$ among the
elements of $T$. To compute $t_1 + d_1 \leq t_2 + d_2$, the audit
process can instead compute $e\cT(\pred{Enc}(t_1),
\pred{Enc}(d_1)) \leq e\cT(\pred{Enc}(t_2),
\pred{Enc}(d_2))$. The map $e\cT$ can be represented in many
different ways. In our implementation, we use nested hash tables,
where the outer table maps $\pred{Enc}(t)$ to an inner hash table and
the inner table maps $\pred{Enc}(d)$ to the relative order of
$t+d$. For audit applications where the log and policy are fixed
upfront, this simple data structure $e\cT$ suffices. 

The scheme \DOPE is more general and allows the client to
incrementally update $e\cT$ on the cloud. This is relevant when either
the policy or the log changes often. A single addition or deletion of
$t$ or $d$ can cause the map $e\cT$ to change for potentially all
other elements and, hence, a naive implementation of $e\cT$ may incur
cost linear in the current size of $T$ for single updates. Popa \etal
show how this cost can be made logarithmic by interactively
maintaining a balanced binary search tree over encrypted values
$\pred{Enc}(t)$ and using \emph{paths} in this search tree as the
co-domain of $e\cT$. We extend this approach by maintaining a binary
search tree over pairs $(\pred{Enc}(t),\pred{Enc}(d))$, where the
search order reflects the natural order over $t+d$. Since the cloud
never sees plaintext data, the update of this binary search tree and
the map $e\cT$ \emph{must} be interactive with the client. We omit the
details of this interactive update and refer the reader to~\cite{mope}
for details. As the cloud may be compromised, the security property we
prove of \DOPE (Section~\ref{section:security:ope}) holds despite the
adversary observing every interaction with the client. We note that an
audit algorithm never updates $e\cT$, so its execution remains
non-interactive.

\section{Security Analysis}
\label{section:securityAnalysis}
We now prove that our schemes \dt, \kh and \DOPE are secure.  We start
with \DOPE, because \dt and \kh rely on it.



\subsection{Security of mOPED}
\label{section:security:ope}

We formalize the security of \DOPE as an indistinguishability game in
which the adversary provides two sequences of timestamps and a set of
displacements $D$, then observes the client and server construct
the \DOPE data structure $e\cT$ on one of these sequences chosen
randomly and then tries to guess which sequence it is. We call this
game \DOCPA (indistinguishability under chosen distances with
displacement attack). This definition is directly based on the \OCPA
(indistinguishability under ordered chosen plaintext attack)
definition by Boldyreva~\etal~\cite{BCLO} and the \LOR security
definition by Pandey and Rouselakis~\cite{PR12}. Because $e\cT$
intentionally reveals the relative order of all timestamps after
displacement with constants in $D$, we need to impose a constraint on
the two sequences chosen by the adversary. Let $\vec{u}[i]$ denote the
$i$th element of the sequence $\vec{u}$. We say that two sequences of
timestamps $\vec{u}$ and $\vec{v}$ are equal up to distances with
displacements $D$, written $\pred{EDD}(\vec{u}, \vec{v}, D)$ iff
$|\vec{u}| = |\vec{v}|$ and $\forall d,d' \in D, i, j. ~(\vec{u}[i] +
d \geq \vec{u}[j] + d') \Leftrightarrow (\vec{v}[i] +
d \geq \vec{v}[j] + d')$. We describe here the \DOCPA game and the
security proof for \DOPE with deterministic encryption; the case
of \DOPE with AKH hashes is similar.

\vspace{5mm}
\noindent
\textbf{\DOCPA game.}
The \DOCPA security game between a client or challenger \cl (\ie,
owner of the audit log) and an adaptive, probabilistic polynomial time
(ppt) adversary \adv for the security parameter \param proceeds as
follows:

\begin{packedenumerate}
\item \cl generates a secret key $K_\pred{time}$ using the probabilistic key generation algorithm \keygen.
 $K_\pred{time} \overset{\$}{\leftarrow} \keygen(1^\param)$.
\item \cl chooses a random bit $b$. $b\overset{\$}{\leftarrow}\{0, 1\}$.
\item \cl creates an empty $e\cT$ 
 on the cloud.

\item \adv chooses a set of
  distances $D =\{d_1,\ldots,d_n\}$ and sends it to \cl.

\item \cl and \adv engage in a polynomial of \param number of rounds
 of interactions. In each round $j$:

\begin{packedenumerate}
\item \adv selects two values $v^0_j$ and $v^1_j$ and sends them to \cl.

\item \cl deterministically encrypts the following $n+1$ values
  $v^b_j$, $v^b_j + d_1, v^b_j + d_2, \ldots, v^b_j + d_n$ using
  $K_\pred{time}$.

\item \cl interacts with the cloud to insert
  $\deter(K_\pred{time}, v^b_j)$ and $\{\deter(K_\pred{time},
  v^b_j+d_i)\}_{i=1}^n$ into $e\cT$. 
The adversary
  observes this interaction and the cloud's complete state, but
  not \cl's local computation.
\end{packedenumerate}

\item \adv outputs his guess $b'$ of $b$.
\end{packedenumerate}

\adv wins the \DOCPA security game iff:
\begin{packedenumerate}\setlength{\itemsep}{0em}
\item \adv guesses  $b$ correctly (\ie, $b=b'$);
\item $\pred{EDD}([v_0^0,\ldots,v_m^0],[v_0^1,\ldots,v_m^1],D)$ holds, where $m$ is the number of rounds played in the game.
\end{packedenumerate}

Let $\mathsf{win}^{\adv,\kappa}$ be a random variable which is 1 if
the \adv wins and 0 if \adv loses. Recall that a function
$f:\mathbb{N}\rightarrow\mathbb{R}$ is \emph{negligible} with respect
to its argument \param, if for every $c\in\mathbb{N}$ there exists
another integer $K$ such that for all $\param > K$, $f(\param) <
x^{-c}$. We write $\pred{negl}(\param)$ to denote some neglible
function of $\kappa$.


\begin{theorem}[Security of \DOPE with deterministic encryption]
\label{theorem:dopeSecurity}
Assuming that deterministic encryption is a pseudorandom function,
our \DOPE scheme is \DOCPA secure, \ie,
$\mathsf{Pr}[\mathsf{win}^{\adv, \kappa}=1]\leq \frac{1}{2}
+ \mathsf{negl}(\param)$ where the probability is taken over the
random coins used by \adv as well as the random coins used in choosing
the key and the
bit $b$.
\end{theorem}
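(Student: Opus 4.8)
The proof is a two-step simulation argument that mirrors, and extends to displacements, the order-preserving security proof of \MOPE~\cite{mope}. \emph{Step~1: derandomize the encryption.} I would first replace every evaluation $\deter(K_\pred{time},\cdot)$ in the \DOCPA\ game by a lazily sampled uniformly random function $F$, and bound the change in \adv's winning probability by $\mathsf{negl}(\param)$ via a reduction to the PRF assumption: a distinguisher $\mathcal{B}$ runs the whole \DOCPA\ game for \adv, answers each requested encryption with its own oracle (so $\mathcal{B}$ never needs $K_\pred{time}$), and at the end checks $\pred{EDD}$ on the plaintext sequences it itself fed \adv\ and outputs $1$ iff \adv\ wins; then $\Pr[\mathcal{B}^{\deter(K_\pred{time},\cdot)}=1]$ and $\Pr[\mathcal{B}^{F}=1]$ are exactly \adv's winning probabilities in the real and the idealized games, so by PRF security the two differ by at most $\mathsf{negl}(\param)$.

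\noindent\textbf{Step 2: an information-theoretic bound in the idealized game.}
In the idealized game, everything \adv\ observes --- the tree $e\cT$ at every point, the client/server messages during each of the $n{+}1$ per-round insertions (each insertion is a client-driven binary search in which the outcomes of the decrypted comparisons are revealed to the cloud), and the rebalancing operations --- is a randomized function of only: the (public, $b$-independent) insertion order; the rebalancing coins; the \emph{labeled preorder} of the inserted values, i.e.\ the set of true relations $v^b_j + d \ge v^b_k + d'$ over rounds $j,k$ and $d,d' \in D\cup\{0\}$ (here $0$ appears because the game also inserts $v^b_j$ itself, as the pair $(\deter(K_\pred{time},v^b_j),\deter(K_\pred{time},0))$); and fresh uniform strings from $F$ whose only correlations are the equalities $v^b_j+d = v^b_k+d'$ (the symmetric part of that preorder), since distinct preimages collide under $F$ only with negligible probability. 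Now the winning condition demands $\pred{EDD}([v^0_0,\dots,v^0_m],[v^1_0,\dots,v^1_m],D)$, which is precisely the assertion that this labeled preorder coincides for $b=0$ and $b=1$. Since such an adversary can be assumed WLOG to never leave an \pred{EDD}-consistent trajectory (an adversary that does can only lose, so making it instead output a random bit changes nothing), and since \pred{EDD}-consistency of a prefix already forces the labeled preorders of the two length-$t$ prefixes to agree, I can couple the $b=0$ and $b=1$ executions --- matching up $F$'s fresh values across corresponding equality classes and using the same rebalancing coins --- so that \adv's view, and hence its guess $b'$, is literally the same random variable in both worlds while $b$ remains an independent fair coin. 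Therefore $\Pr[b=b' \wedge \pred{EDD}] = \frac12\Pr[\pred{EDD}] \le \frac12$ in the idealized game up to the negligible $F$-collision slack, and Step~1 lifts this to $\Pr[\mathsf{win}^{\adv,\param}=1] \le \frac12 + \mathsf{negl}(\param)$ in the real game.

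\noindent\textbf{Main obstacle.}
The substantive work is Step~2's claim that watching the interactive construction of $e\cT$ leaks nothing beyond the labeled preorder of the inserted values: this is essentially the analysis of \MOPE~\cite{mope}, but I would re-derive it carefully in our setting because \DOPE\ inserts $n{+}1$ \emph{correlated} values per round and stores nodes as pairs $(\deter(K_\pred{time},t),\deter(K_\pred{time},d))$, so repeated timestamps and repeated displacements are directly visible in addition to the comparison outcomes. The second delicate point is making the adaptivity/coupling argument above airtight --- in particular that \pred{EDD}-consistency really is prefix-closed (it is, since restricting the quantified indices to a prefix only weakens the condition) so that the two executions can be kept in lockstep. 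Everything else --- the PRF reduction, the collision bound for $F$, and the side-check that $\pred{EDD}$ is implicitly over $D\cup\{0\}$, matching the set of values the game actually inserts --- is routine.
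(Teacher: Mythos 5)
Your proposal is correct and follows essentially the same route as the paper, whose entire proof is a one-sentence sketch ("by a hybrid argument, augmenting the \MOPE security proof to take displacements into account"): your Step~1 is the hybrid/PRF reduction and your Step~2 is precisely the augmented \MOPE order-only-leakage argument. Your write-up is considerably more detailed than the paper's, and your observation that $\pred{EDD}$ must effectively range over $D\cup\{0\}$ (since the game also inserts $v^b_j$ itself) is a point the paper glosses over entirely.
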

\begin{proof}
By a hybrid argument. We augment a similar proof of security for
the \MOPE scheme~\cite{mope} to also take displacements $D$ into
account.
\end{proof}

\noindent\textbf{Security of \DOPE with AKH hash.}
The security game for \DOPE with AKH hashes is very similar
to \DOCPA. We replace $\pred{DET}(\cdot,\cdot)$ with
$\pred{Hash}(\cdot,\cdot)$ in the game. The proof is in the standard
model and reduces to the security of AKH~\cite[Definition 4]{ADJJOIN}
and finally to the elliptic-curve decisional Diffie-Hellman (ECDDH)
assumption.


\subsection{Security of Eunomia$^\mathbf{DET}$}
\label{section:security:det}

We prove security for \dt, formalized as an indistinguishability
game. We first define a notion of log equivalence that characterizes
the confidentiality achieved by \dt (and, as we explain later,
by \kh). This notion is a central contribution of our work. The
security theorem in this section shows that by looking at the \dt
encryption of a log, a PPT adversary can learn only that the log
belongs to its equivalence class (with non-negligible
probability). Hence, the equivalence class of the log represents the
uncertainty of the adversary about the log's contents and, therefore,
characterizes what confidentiality the scheme provides.

\begin{definition}[Plaintext log equivalence]\label{def:plainLogEquivalence}
Given two plaintext audit logs $\cL_1$ and $\cL_2$, an equality scheme
\escheme, a set of constants $C$ and a set of displacements $D\subseteq C$, $\cL_1$ and $\cL_2$ are equivalent, denoted by
$\cL_1\equiv_{(\escheme, C,D)}\cL_2$, if and only if all of the
followings hold:
\begin{packedenumerate}\setlength{\itemsep}{-0.5em}
\item $\cL_1$ and $\cL_2$ have the same schema and tables of the same
  name in $\cL_1$ and $\cL_2$ have the same number of records (rows).
\item For each equivalence class of columns defined by \escheme,
there is a bijection from values of $\cL_1$ to values of $\cL_2$.  (By
equivalence class of columns defined by \escheme, we mean an
equivalence class of columns defined by the reflexive, symmetric,
transitive closure of \escheme.)  For a table $\pred{t}$ and a column
$\pred{a}$, let $\cM_{\pred{t},\pred{a}}$ denote the bijection
corresponding to the equivalence class of \escheme in which
$(\pred{t}, \pred{a})$ lies. Let $v$ be the value in some row $i$ of
the table $\pred{t}$, column $\pred{a}$ in $\cL_1$. Then,
\begin{packedenumerate}
\item The value in the $i$th row of table $\pred{t}$, column
  $\pred{a}$ in $\cL_2$ is $\cM_{\pred{t},\pred{a}}(v)$.
\item If $v \in C$, then $\cM_{\pred{t},\pred{a}}(c) = c$.
\item $|v| = |\cM_{\pred{t},\pred{a}}(v)|$.
\end{packedenumerate}
\item Let $\pred{timeStamps}(\cL_1)$ be the sequence of timestamps in
  $\cL_1$ obtained by traversing the tables of $\cL_1$ in any order
  and the timestamps within each table in row order. Let
  $\pred{timeStamps}(\cL_2)$ be the timestamps in $\cL_2$ obtained
  similarly, traversing tables in the same order. Then,
  $\pred{EDD}(\pred{timeStamps}(\cL_1), \pred{timeStamps}(\cL_2), D)$
  holds.
\end{packedenumerate}
\end{definition}

We now intuitively explain the requirements of two plaintext log $\cL_1$ and $\cL_2$
to be equivalent
according to our definition.
The first requirement of our definition requires both $\cL_1$ and $\cL_2$ to
have the same schema, same table names, and same number of records. The second
requirement states that for each equivalence class of columns defined by the
reflexive, symmetric, transitive closure of \escheme, there exists a bijection
mapping $\cM$ from the plaintext values appearing in those columns in $\cL_1$
to the plaintext values appearing in those columns in $\cL_2$ such that:
(a) All constant values $c$ appearing in those columns (\ie, $c\in C$) are mapped to each other;
(b) If there is a mapping between two values $p_1$ to $p_2$ according to \cM,
then $p_1$ and $p_2$ are of same length;
(c) If we take any arbitrary row $i$ in any arbitrary column $j$ (where $j$ belongs to the table $T$ and
the equivalence class in question) in $\cL_1$ and assume $\cL_1.T[i][j] = v_1$, then
if we apply $\cM$ over $v_1$, we will get the value of $\cL_2.T[i][j]$, and vice versa.
The final requirement demands that if we take
any two arbitrary displacements $d_1, d_2\in D$,
any two arbitrary timestamps $t_i^1, t_j^1$
from $\cL_1$, and the values of the  corresponding  cells in  $\cL_2$ are
$t_i^2, t_j^2$, then the following holds: $t_i^1+d_1\leq t_j^1 + d_2\Leftrightarrow t_i^2+d_1\leq t_j^2 + d_2$.


We now define the \CPLA game (Chosen Plaintext Log
Attack), which defines what it means for
two logs to be indistinguishable to an adversary \adv
under \dt.

\noindent\textbf{\CPLA game.}
The \CPLA game is played between a client or challenger \cl and an
adversary
\adv for all large enough security parameters \param.\vspace*{-2pt}
\begin{enumerate}\setlength{\itemsep}{-2pt}
\item \adv picks a log schema \cS, the sets $C$, $D$ and an equality
  scheme \escheme and gives these to \cl.
\item \cl probabilistically generates a set of secret keys \keys based
  on the sufficiently large security parameter \param, the log
  schema \cS, and the equality scheme \escheme.
  $\keys\overset{\$}{\leftarrow} \pred{KeyGen}^\pred{DET}(1^\param,
  \cS, \escheme)$.
\item \cl randomly  selects a bit $b$.
$b\overset{\$}{\leftarrow}\{0,1\}$.
\item \adv chooses two plaintext audit logs $\cL_0$ and $\cL_1$ such
  that both $\cL_0, \cL_1$ have schema \cS, $\cL_0\equiv_{(\escheme,
    C,D)}\cL_1$, $\cL_0\neq \cL_1$, and sends $\cL_0$, $\cL_1$ to \cl.
\item Following the scheme \dt, \cl deterministically encrypts $\cL_b$
  according to the key set \keys to obtain the encrypted audit log
  $e\dbd{b}$. It then constructs the \DOPE data structure
  $e\cT_b$. \adv may observe the construction of the \DOPE
  data-structure $e\cT_b$ passively. \cl then sends $\langle e\dbd{b},
  e\cT_b\rangle$ to \adv.

($\langle e\dbd{b},
  e\cT_b\rangle\leftarrow\pred{EncryptLog}^\pred{DET}(\cL_b, \cS, \keys)$.)
\item For any constant $c \in C$, if $c$ appears in table $\pred{t}$,
  column $\pred{a}$ of $\cL_b$, then \cl gives \adv the encryption of $c$ with
  the encryption key of column $\pred{a}$.
\item \adv runs a probabilistic algorithm that may invoke the
  encryption oracle on keys from \keys but never asks for the
  encryption of any value in $\cL_0$ or $\cL_1$.
\item \adv outputs its guess $b'$ of $b$.
\end{enumerate}

\adv wins the 
\CPLA game \emph{iff} $b=b'$.  Let
the random variable $\pred{win}^{\adv}_{\pred{DET}}$ be 1 if the \adv
wins and 0 otherwise.


\begin{theorem}[Security of \dt]\label{thm:security:det}
If deterministic encryption is a pseudorandom function,  \dt
is \CPLA secure, i.e., for any ppt adversary \adv and sufficiently
large $\param$,
$\mathsf{Pr}[\pred{win}^{\adv}_{\pred{DET}}=1] \leq \frac{1}{2} +
\pred{negl}(\param)$ where the probability is taken over the random coins
used by \adv as well as the random coins used in choosing keys and the
random bit $b$.
\end{theorem}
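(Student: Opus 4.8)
The plan is to prove the statement by a short sequence of games that first strips out all cryptography and then reduces to the combinatorics forced by $\cL_0\equiv_{(\escheme,C,D)}\cL_1$. Write $\mathbf{G}^0$ and $\mathbf{G}^1$ for the \CPLA game with challenge bit hard-wired to $0$ and $1$, respectively. Since $\mathsf{Pr}[\pred{win}^{\adv}_{\pred{DET}}=1]=\frac12+\frac12\bigl(\mathsf{Pr}[b'=1\mid b=1]-\mathsf{Pr}[b'=1\mid b=0]\bigr)$, it suffices to show that the adversary's view in $\mathbf{G}^0$ and in $\mathbf{G}^1$ are computationally indistinguishable; a $\pred{negl}(\param)$ bound on that difference gives the theorem.

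First I would remove the pseudorandom function. The key set \keys contains at most one key per column of the adversary-chosen schema \cS, hence a polynomial number of keys; by a standard hybrid that replaces $\deter(k,\cdot)$ by a lazily sampled truly random function $h_k$ one key at a time, each step costs at most the PRF-distinguishing advantage, so the game $\widehat{\mathbf{G}}^b$ in which \emph{all} keys are random functions has a view that is $\pred{negl}(\param)$-close to that of $\mathbf{G}^b$. The substitution is applied uniformly to step~7's encryption oracle and to the construction of $e\cT_b$, which uses only $\deter(K_\pred{time},\cdot)$; after it, every object \adv sees — the ciphertext database $e\dbd{b}$, the step~6 constant ciphertexts, the (possibly interactive) transcript and final state of $e\cT_b$, and all oracle answers — is a deterministic function of the two logs, the independent random functions $\{h_k\}$, and \adv's coins.

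Next I would show that $\widehat{\mathbf{G}}^0$ and $\widehat{\mathbf{G}}^1$ induce \emph{perfectly} identical views, by coupling the two runs on the same adversary coins and the same sampled $h_k$'s (transported through the bijections $\cM_{\pred{t},\pred{a}}$). Fix one \escheme-equivalence class with random function $h$ and bijection $\cM$. Clause~1 of Definition~\ref{def:plainLogEquivalence} makes the two databases have the same layout; clause~2(a) says each cell of $\cL_1$ holds the $\cM$-image of the corresponding cell of $\cL_0$, so injectivity of $\cM$ makes the \emph{equality pattern} of cells in the class identical across the two logs (this already covers timestamp columns), and clause~2(c) makes corresponding plaintexts equi-length; hence $(h(v))_v$ in $\widehat{\mathbf{G}}^0$ and $(h(\cM(v)))_v$ in $\widehat{\mathbf{G}}^1$ are both uniform over string-tuples of the prescribed lengths with the common equality pattern (precomposing a random function with a bijection of its domain yields a random function), so they have the same law. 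Clause~2(b) ($\cM$ fixes $C$) makes the step~6 constant ciphertexts coincide. For the search structure, clause~3, $\pred{EDD}(\pred{timeStamps}(\cL_0),\pred{timeStamps}(\cL_1),D)$, says the positional correspondence preserves the order of every displaced pair $t_i+d$ versus $t_j+d'$, so the rank that $e\cT$ assigns to a pair $(h(t_i),h(d))$ is the same in both worlds; re-running the hybrid behind Theorem~\ref{theorem:dopeSecurity} \emph{inside} this game shows the entire (interactive) construction of $e\cT_b$ is a function of $h$ and of that common rank structure, hence independent of $b$. Finally, adaptivity: the \CPLA rules forbid \adv from querying the encryption of any value occurring in $\cL_0$ or $\cL_1$, and $\{\cM(v):v\text{ in that column of }\cL_0\}$ is exactly the multiset of values in the corresponding column of $\cL_1$, so in either world every oracle query hits $h$ at a point disjoint from all ``log points'' and receives a fresh uniform answer; by induction on the number of messages exchanged, every transcript prefix has the same distribution in the two worlds, so \adv reaches the same state and outputs $b'$ with the same probability. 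Thus $|\mathsf{Pr}[b'=1\mid b=1]-\mathsf{Pr}[b'=1\mid b=0]|=0$ in $\widehat{\mathbf{G}}$, and folding back the PRF hybrids yields the claimed bound.

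I expect the main obstacle to be this last coupling combined with the fact that the timestamp ciphertexts in $e\dbd{b}$, the displacement ciphertexts, and the structure $e\cT_b$ are all produced from the \emph{same} key $K_\pred{time}$: one cannot invoke Theorem~\ref{theorem:dopeSecurity} as a black box, because in \CPLA that key is also reused across the deterministic cell encryptions that \adv sees directly, and one must check that clause~3 is exactly strong enough to make the \emph{joint} distribution of all these correlated objects independent of $b$, while handling the adaptive oracle. The remaining ingredients (the per-class hybrid over keys and the ``bijection-of-a-random-function'' observation) are routine.
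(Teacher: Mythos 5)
Your proof takes essentially the same route as the paper's: a hybrid argument that successively replaces each deterministic-encryption key by a truly random function, so that any distinguishing advantage between consecutive hybrids breaks the PRF assumption. The paper's proof is only a two-sentence sketch of that hybrid; your write-up additionally supplies the information-theoretic coupling (via the bijections $\cM_{\pred{t},\pred{a}}$, the fixed policy constants, the length condition, and the $\pred{EDD}$ rank structure for $e\cT$) showing the two residual games are perfectly indistinguishable, which is exactly the step the paper leaves implicit, and your treatment of it (including the shared $K_\pred{time}$ and the adaptive oracle restriction) is correct.
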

\begin{proof}
By
hybrid argument. We successively replace uses of deterministic
encryption with a random oracle. If the \adv can distinguish two
consecutive hybrids with non-negligible probability, it can also
distinguish a random oracle from a pseudorandom function, which is a
contradiction.
\end{proof}

Intuitively, this theorem says that any adversary cannot distinguish
two equivalent logs if they are encrypted with \dt, except with
negligible probability.

\subsection{Security of Eunomia$^\mathbf{KH}$}
\label{section:security:kh}

We now define and prove security for the log encryption
scheme \kh. The security game, \CPLAKH, is similar to that for \dt and
uses the same notion of log equivalence. The proof of security for \kh
reduces to the ECDDH assumption.


\noindent\textbf{\CPLAKH game.}
The \CPLAKH game is played between a challenger \cl and an adversary
\adv for all large enough security parameters \param. It is very similar to
\CPLA security game but has the following differences. All the encryption is
done using the \kh approach. Additionally, there is one more step after step 5 where
the \cl generates the token list $\vec{\Delta}$ according to \escheme and send its to the
\adv. We do not show the \CPLAKH game due to space constraint.
\adv wins the
\CPLAKH game \emph{iff} $b=b'$.  Let
the random variable $\pred{win}^{\adv}_{\pred{KH}}$ be 1 if the \adv
wins and 0 otherwise.


\begin{theorem}[Security of \kh]\label{thm:security:kh}
If the ECDDH assumption holds and the encryption scheme used in \kh is
IND-CPA secure, then \kh is \CPLAKH secure, i.e., for any ppt
adversary \adv and sufficiently large $\param$, the following holds:
$\mathsf{Pr}[\pred{win}^{\adv}_{\pred{KH}}=1] \leq \frac{1}{2} +
\pred{negl}(\param)$, where the probability is taken over the random coins
used by \adv as well as the random coins used in choosing keys and the
random bit $b$.
\end{theorem}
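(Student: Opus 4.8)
The plan is to reuse the shape of the proof of Theorem~\ref{thm:security:det}, but to replace its single appeal to pseudorandomness of deterministic encryption by three separate building blocks, one for each component of a \kh ciphertext that \adv observes: the probabilistic parts $\pred{Encrypt}(k_e,\cdot)$ are handled by IND-CPA security of the encryption scheme; the adjustable keyed hashes $\pred{Hash}(k_h,\cdot)$ together with the token list $\vec{\Delta}$ are handled by the AKH security property of Popa \etal~\cite{ADJJOIN}, which reduces to ECDDH; and the \DOPE data structure $e\cT_b$ over timestamps is handled by the security of \DOPE with AKH hashes (Section~\ref{section:security:ope}), which again reduces to ECDDH. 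Concretely, I would build a chain of games $G_0,\ldots,G_k$ in which $G_0$ is the \CPLAKH game run with $b=0$ and $G_k$ is the same game run with $b=1$, and argue that each adjacent pair is computationally indistinguishable; summing the per-step gaps bounds \adv's advantage at distinguishing the $b=0$ and $b=1$ runs by $\pred{negl}(\param)$, which gives the stated bound $\mathsf{Pr}[\pred{win}^{\adv}_{\pred{KH}}=1]\leq\frac{1}{2}+\pred{negl}(\param)$.

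First I would strip out the probabilistic ciphertexts: replace each component $\pred{Encrypt}(k_e,v)$ occurring in $e\dbd{b}$ by $\pred{Encrypt}(k_e,0^{|v|})$, one cell at a time, each replacement justified by one reduction to the IND-CPA challenger---legitimate because distinct columns use distinct encryption keys and, by the rules of the game, \adv never queries its encryption oracle on a value from $\cL_0$ or $\cL_1$. Clause~2(c) of Definition~\ref{def:plainLogEquivalence}, $|v|=|\cM_{\pred{t},\pred{a}}(v)|$, ensures the zero-strings substituted in the $b=0$ and $b=1$ runs have matching lengths, so after this sub-chain the $\pred{Encrypt}$ parts are identically distributed in the two worlds, as is everything derived from the policy constants (clause~2(b) fixes constants). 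Next I would partition the non-timestamp columns into the equivalence classes of the reflexive--symmetric--transitive closure of \escheme; within a class every cell is the image under a single bijection $\cM_{\pred{t},\pred{a}}$ (clause~2(a)), so the equality pattern---the only thing the intra-class tokens let \adv observe---is preserved between the two worlds, and by AKH security no cross-class information beyond length leaks. A reduction to AKH security then lets me swap, class by class, the $b=0$ hashes-plus-tokens for the $b=1$ ones at negligible cost. Finally I would bundle the timestamp hashes (all under the shared key $K_\pred{time}$) with the structure $e\cT_b$ and replace $e\cT_0$ by $e\cT_1$ in one step, appealing to the security of \DOPE with AKH hashes; clause~3 of Definition~\ref{def:plainLogEquivalence}, namely $\pred{EDD}(\pred{timeStamps}(\cL_0),\pred{timeStamps}(\cL_1),D)$, is precisely the admissibility predicate of that game, so this last swap costs only $\pred{negl}(\param)$.

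The main obstacle is that these building blocks do not act on disjoint data: all timestamp columns are hashed with the single key $K_\pred{time}$, and that key also underlies $e\cT_b$, so the AKH-swap and the $e\cT$-swap cannot be carried out independently. I would resolve this by folding the timestamp hashes into the \DOPE-with-AKH reduction, treating $\langle\{\pred{Hash}(K_\pred{time},\cdot)\},e\cT_b\rangle$ as one object whose indistinguishability is argued in a single reduction to ECDDH. The second point to get right is that each intermediate game must still answer \adv's adaptive encryption-oracle queries consistently; this works because in every hybrid the reduction either holds the relevant keys or forwards the query to its own (IND-CPA, AKH, or \DOPE) oracle, and \adv is barred from querying on $\cL_0$/$\cL_1$ values. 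A last detail worth stating explicitly is that the transitive closure of \escheme is \emph{exactly} the equality relation visible to \adv---no further equalities leak through the interaction of the tokens with the hash structure---which is what the AKH security statement provides.
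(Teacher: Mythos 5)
Your proposal is correct and follows essentially the same route as the paper, whose proof is only a two-sentence sketch: a hybrid argument reducing to the IND-CPA security of the probabilistic encryption component and to the AKH security of Popa \etal (which in turn rests on ECDDH). Your elaboration---in particular folding the timestamp hashes under $K_\pred{time}$ into the \DOPE-with-AKH reduction rather than treating them as an independent hybrid step, and invoking clauses 2(a)--(c) and 3 of Definition~\ref{def:plainLogEquivalence} to justify admissibility at each step---is a faithful and more detailed instantiation of exactly what the paper's sketch asserts.
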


\begin{proof}
By
hybrid argument, we reduce to the IND-CPA security of encryption
and the security of AKH~\cite[Definition 4]{ADJJOIN}. The latter
relies on the ECDDH assumption.
\end{proof}

\textbf{Generalizing security definitions. }\CPLA and \CPLAKH security definitions
are presented as a  game in which
the adversary and the challenger interact for a single round. However, it is possible to lift our current
security definitions for polynomial round of interactions between the adversary and challenger. To achieve this,
we need the requirement that for two consecutive rounds ($i$ and $i+1$) of interactions where $\cL^i_0$, $\cL^i_1$
are adversary-chosen plaintext equivalent logs at step $i$ and $\cL^{i+1}_0$, $\cL^{i+1}_1$ are adversary-chosen
plaintext equivalent logs at step $i+1$, $\cL^{i+1}_0 \geq \cL^{i}_0$ and $\cL^{i+1}_1 \geq \cL^{i}_1$.

\textbf{Information leakage. }Our security definitions
state that the only information leaked by our scheme is which log equivalence class a particular plaintext
audit log belongs to. Note that, a log's membership to an equivalence class is a symbolic representation of
all information that may leak by $\framework$. \emph{Precisely, by analyzing the log equivalence definition} (Definition~\ref{def:plainLogEquivalence})
\emph{it is possible to infer all information that may leak by $\framework$}. For instance, according to Definition~\ref{def:plainLogEquivalence},
two equivalent plaintext logs have the same frequency distribution in any column of the log.
However, two logs in different equivalence class may have different frequency distribution in any column.
Hence, $\framework$ leaks the frequency distribution in any column.

\section{Auditing Algorithm}
\label{section:algorithm}
We now present our auditing algorithm \ereduce, which is an
enhancement of its plaintext counterpart \reduce \cite{GJD11}. 
We choose to enhance \reduce as it supports a rich set of policies
including HIPAA and GLBA. Further, \reduce has support for
incompleteness in the audit log.  We write $\ereduce^{\pred{DET}}$ to
denote the audit algorithm instance for logs encrypted under
\dt
and $\ereduce^{\pred{KH}}$ for \kh-encrypted audit logs.  
The main difference between \reduce and $\ereduce^{\pred{KH/DET}}$ is
that $\ereduce^{\pred{KH/DET}}$ requires the special \DOPE
data-structure to evaluate the \pred{timeOrder} predicate whereas
\reduce directly checks the linear integer constraint. 
We will describe
$\ereduce^{\pred{KH}}$ in detail and point out the difference between
these two instances.


\subsection{Auxiliary Definitions}

A \emph{substitution} $\sigma$ is a finite map that maps variables to
value, provenance pairs. 
 Each element of the range of a
substitution is of form $\langle v, \ell\rangle$ in which $v$ refers
to the value that the variable is mapped to and $\ell$ refers to the
provenance of $v$. 
 The provenance $\ell$ refers to the
source of the value and is of form $\pred{p}.\pred{a}$ where \pred{p}
represents a table name and \pred{a} represents a column name.
%
We commonly write a substitution $\sigma$ as a finite list of
elements, each element having the form: $\langle x, v^h, v^e,
\ell\rangle$.  For any variable $x$ in $\sigma$'s domain, we use
$\sigma(x).\mathsf{hash}$, $\sigma(x).\mathsf{cipher}$, and
$\sigma(x).\ell$ to select the hash value (\ie, $v^h$), the ciphertext
value (\ie, $v^e$), and the provenance (\ie, $\ell$), 
respectively.

We say
substitution $\sigma_1$ extends $\sigma_2$ (denoted $\sigma_1\geq
\sigma_2$) if $\sigma_1$ agrees with all variable mapping in
$\sigma_2$'s domain.  Given a substitution $\sigma$ we define
$[\sigma]=\{\langle x, \pred{tb}.\pred{cl}\rangle|\exists
v. \sigma(x)=\langle v, \pred{tb}.\pred{cl}\rangle\}$.  
We use
$\sigma\downarrow X$, where $X\subseteq\pred{domain}(\sigma)$,  to denote the substitution $\sigma'$ such that
$\sigma \geq \sigma'$ and domain of $\sigma'$ only contains variables
from the sequence $X$. 
We can lift the $\downarrow$ operation  
for a set of
substitutions $\Sigma$.  We use $\bullet$ to denote the identity
substitution. We say a substitution $\sigma$ \emph{satisfy} a formula
$g$ on a log $e\cL$ if replacing each free variable $x$ in $g$ with concrete value
$\sigma(x).\pred{hash}$ results in a formula that is true on $e\cL$.

\subsection{Algorithm}
Key functions of \ereduce is summarized below.


$\ereduce^\pred{KH}(e\cL,\policy,\vec{\Delta},\sigma)$ is the
  top level function that takes as input an \kh encrypted audit log
  $e\cL$, a constant encrypted policy \policy, a set of tokens
  $\vec{\Delta}$, and an input substitution $\sigma$, and returns a
  residual policy $\psi$. $\psi$ represents a formula 
  containing the part of the original policy \policy that cannot be
  evaluated due to incomplete information in $e\cL$. We use $\bullet$
  as the input substitution to the initial call to
  $\ereduce^\pred{KH}$. \ereduce
  (resp., \sathat and \sat) evaluates the input formula from left to right,
  respecting operator precedence.


$\sathat^\pred{KH}(e\cL,g,\vec{\Delta},\sigma)$
is an auxiliary
  function used by $\ereduce^\pred{KH}$ while evaluating quantifiers
  to get all finite substitutions that satisfy a formula. 
It takes as input an \kh encrypted audit log $e\cL$, a constant
  encrypted formula $g$, a set of tokens $\vec{\Delta}$, and an input
  substitution $\sigma$, and returns all finite substitutions for free
  variables of $g$ that extends $\sigma$ and satisfy $g$ with respect
  to $e\cL$.


$\sat^\pred{KH}(e\cL,\pred{p}(\vec{t}),\vec{\Delta},\sigma)$
is an
  auxiliary function used by $\sathat^\pred{KH}$ for evaluating all
  finite substitutions for a given predicate with respect to an input
  substitution. 
  The inputs $e\cL$, $\vec{\Delta}$, and $\sigma$ have their usual
  meaning. $\pred{p}(\vec{t})$ is a constant encrypted predicate.
  This function returns all finite substitutions for free variables of
  $\pred{p}(\vec{t})$ that extend the input substitution $\sigma$ and
  satisfy $\pred{p}(\vec{t})$ with respect to $e\cL$. The
  implementation of $\sat^\pred{KH}$ is audit log representation
  dependent.  For our case, evaluation of \pred{timeOrder} predicate
  $\sat^\pred{KH}$ consults the \DOPE data structure whereas
  evaluation of other 
  predicates queries the database representing the audit log.

For ease of exposition, we drop the superscript $\pred{KH}$ from
$\ereduce^\pred{KH}$, $\sathat^\pred{KH}$, and $\sat^\pred{KH}$ in the
discussion below.
\ereduce eagerly evaluates as much of the input policy \policy as it can; in case it 
cannot evaluate portions of \policy due to incompleteness in $e\cL$, it returns 
that portion of \policy as part of the result. The return value of \ereduce is thus 
a formula in our logic $\psi$ (\ie, \emph{residual formula}). 
Auditing with \ereduce is an iterative process. When the current log $e\cL$ 
is extended with additional information 
(\ie, removing some incompleteness), resulting in the new log $e\cL_1$ (\viz, $e\cL_1\geq e\cL$), 
one can call \ereduce again with the residual formula $\psi$ as the input policy and $e\cL_1$ 
as the input log. 

We present selected cases of \ereduce in Figure~\ref{fig:ereduce}. 
The conditions above the bar are premises and the condition below
the bar is the conclusion.
We use the notation 
 $\pred{f}(\vec{a})\Downarrow \psi$ to mean function 
 \pred{f} returns $\psi$ when applied to arguments $\vec{a}$.


\begin{figure}[tb]
\centering
\(\small
\begin{array}{l}
\inferrule*[Right=R-P]{
p(\vec{t'})\leftarrow\forall t_i\in\pred{Var}.p(\vec{t})[t_i\mapsto\langle\pred{Adjust}(\sigma(t_i).\pred{hash},\\\\\Delta_{\sigma(t_i).\ell\rightarrow p.i}),\sigma(t_i).\pred{cipher}\rangle]
\\\\
P\leftarrow e\cL(\pred{p}(\vec{t'}))
 }
{\ereduce(e\cL,\pred{p}(\vec{t}),\vec{\Delta},\sigma)\Downarrow P}\\[1ex]
\inferrule*[Right=R-$\bigvee$]
{
\ereduce(e\cL,\policy_1,\vec{\Delta},\sigma)\Downarrow \policy'_1 \\\\
\ereduce(e\cL,\policy_2,\vec{\Delta},\sigma)\Downarrow \policy'_2 \quad 
\psi\leftarrow\policy'_1\vee\policy'_2
}
{
\ereduce(e\cL,\policy_1\vee\policy_2,\vec{\Delta},\sigma)\Downarrow \psi
}\\[2ex]
\inferrule*[Right=R-$\forall$]{
\sathat(e\cL,g,\vec{\Delta},\sigma)\Downarrow \Sigma'\\\\
\forall\sigma_i\in\Sigma'.\ereduce(e\cL,\policy,\vec{\Delta},\sigma_i)\Downarrow \policy_i\\\\
\policy'\leftarrow\forall\vec{x}.(g\wedge\vec{x}\notin[\Sigma'\downarrow\vec{x}]\rightarrow\policy)\\\\
\psi\leftarrow\bigwedge_i\policy_i\wedge\policy'
}
{\ereduce(e\cL,\forall\vec{x}.(g\rightarrow\policy),\vec{\Delta},\sigma)\Downarrow \psi}
\end{array}
\)
\caption{\ereduce description \label{fig:ereduce}}
\end{figure}
%

When the input formula to \ereduce is a predicate $\pred{p}(\vec{t})$
(\textsc{R-P} case), \ereduce uses $\sigma$ and $\vec{\Delta}$ to
replace all variables in $\pred{p}(\vec{t})$ with concrete values
(with proper hash adjustments) to obtain a new ground predicate
$\pred{p}(\vec{t'})$. (A \emph{ground} predicate only has constants as
arguments.)  Then it consults $e\cL$ to check whether
$\pred{p}(\vec{t'})$ exists. If $e\cL( \pred{p}(\vec{t'}) ) =
\mathsf{uu}$, indicating the log doesn't have enough information,
then \ereduce returns $\pred{p}(\vec{t'})$. Otherwise, it returns either true
or false depending on whether there is a row in table \pred{p} with
hash values matching $\vec{t'}$. 
For example, let us assume that \ereduce is called with the
input substitution $\sigma = [\langle p_1, v^h_{k_1}, *,
t.\pred{cl}\rangle, \ldots]$ and the input predicate
$\pred{activeRole}(p_1,\langle \pred{doctor}^h_{k_3}, *\rangle)$
($p_1$ is a variable and \pred{doctor} is a constant). 
$*$ represents ciphertext and is not important for this example.
Let us assume that the column 1 of the
\pred{activeRole} table 
uses the keys $(k_2, *)$ whereas in
$\sigma$, the hash value mapped to $p_1$ is generated using
$k_{1}$. Hence, we have to change the value 
$v^h_{k_1}$ to
$v^h_{k_2}$ using the adjustment key $\Delta_{k_1\mapsto
  k_2}\in\vec{\Delta}$. Then, using the following SQL 
query we check whether a row with the appropriate hash values 
exists: ``select * from \pred{activeRole} where column1Hash=$v^h_{k_2}$ 
and column2Hash=$\pred{doctor}^h_{k_3}$''.
 If such a row exists, then 
$\top$ is returned; otherwise, $\bot$ is returned. 
When \ereduce is
called for \pred{timeOrder}, the same hash adjustment applies before the \DOPE data
structure is consulted.


In rule \textsc{R}-$\bigvee$, \ereduce is recursively called for each
of the
sub-clauses in the disjunction:
$\policy_1$ (resp., for $\policy_2$) and the returned residual formula
is $\policy_1'\vee\policy_2'$.

When the input formula is of form
$\forall\vec{x}(g\rightarrow\policy)$ (\textsc{R}-$\forall$ case), we
first use the function $\sathat$ (described below) to get all
substitutions $\Sigma'$ for $\vec{x}$ that extend $\sigma$ and satisfy
$g$ with respect to $e\cL$. As we require policies to pass EQ mode
check, it is ensured that there is a finite number of such
substitutions for $\vec{x}$. For each of these substitutions
$\sigma_i\in\Sigma'$, we then recursively call \ereduce for \policy to
obtain residual formula $\policy_i$. Then the returned residual
formula is $\bigwedge_i\policy_i \wedge \policy'$ where $\policy'$
ensures that the same substitutions $\sigma_i$ for $\vec{x}$ are not
checked again when $e\cL$ is extended.  

Next, we explain selected rules for \sathat (presented below)
with an example.
%

\hspace{0.1\textwidth}\(\small
\begin{array}{l}
\inferrule*[Right=S-P]{
\Sigma\leftarrow\sat(e\cL, \mathsf{p}(\vec{t}), \vec{\Delta}, \sigma)
}{
\sathat(e\cL, \pred{p}(\vec{t}),\vec{\Delta}, \sigma) \Downarrow \Sigma
}\\[2ex]
\inferrule*[Right=S-$\bigwedge$]
{
\sathat(e\cL,g_1,\vec{\Delta},\sigma)\Downarrow \Sigma'\\\\ 
\forall\sigma_i\in\Sigma'.\sathat(e\cL,g_2,\vec{\Delta},\sigma_i)\Downarrow \Sigma_i
}
{
\sathat(e\cL,g_1\wedge g_2,\vec{\Delta},\sigma)\Downarrow \displaystyle\bigcup_i\Sigma_i
}
\end{array}
\)

%
%
Let us assume \sathat is called with the formula $g\equiv p(x)\wedge
q(x,y)$ and substitution $\sigma=\emptyset$ (empty) as input. The
\textsc{S}-$\wedge$ rule applies and first \sathat is
recursive called on $p(x)$ and $\sigma=\emptyset$. Now, the rule
\textsc{S-P} applies. 
Here, $x$ is not in the domain of 
$\sigma$, so the \sat function consults $e\cL$ (\ie,
using SQL query like: ``select * from $p$'') to find concrete
values of $x$ to make $p(x)$ true. Let us assume that we get 
$\langle v^h_{k_1},*\rangle$ (\ie, $k_1$ is used to hash the
column 1 of table $p$).  Hence, \sat returns the substitution
$\sigma_1=[\langle x,v^h_{k_1},*,p.1\rangle]$ as output. 
Going back to the \textsc{S}-$\wedge$  rule, now the second
premise of \textsc{S}-$\wedge$ calls \sathat for
$q(x,y)$ with each substitution obtained after evaluating $p(x)$, in
our case, $\sigma_1$. Let us assume that column 1 and 2 of table $q$
is hashed with key $k_2$ and $k_3$, respectively. While evaluating,
$q(x,y)$ with $\sigma_1$, \textsc{S-P} rule is used.
$\sigma_1$ already maps variable $x$ but with key $k_1$, thus \sat
converts $v^h_{k_1}$ to
$v^h_{k_2}$ using token $\Delta_{p.1\rightarrow q.1}$.  It then tries to get concrete values for $y$ (with
respect to given value of $x$) by consulting table $q$ in $e\cL$ using
the following SQL query: ``select column2Hash, column2Cipher from $q$ where
column1Hash=$v^h_{k_2}$''. Assume the SQL query returns $\langle
w^h_{k_3},*\rangle$ for column2 (\ie, $y$), \sat returns the
substitution $[\langle x,v^h_{k_1},*,p.1\rangle, \langle
y,w^h_{k_3},*,q.2\rangle]$. 

\noindent\textbf{Differences between \ereduce under \dt and \kh}
As we have shown, $\ereduce^{\pred{KH}}$ tracks the provenance of
encrypted data value required for audit. This is not required by
$\ereduce^{\pred{DET}}$.  For $\ereduce^\pred{DET}$, the substitution
simply $\sigma$ maps variables to deterministic ciphertext. Further,
in the \textsc{R-P} and \textsc{S-P} rule, no adjustment is needed.  

\subsection{Properties}
We have proved the correctness of both $\ereduce^{\pred{DET}}$ and
$\ereduce^{\pred{KH}}$. We show the theorem for $\ereduce^{\pred{KH}}$
below.  Theorem~\ref{thm:khcorrectness} states that the decrypted
result of \ereduce and the result of \reduce are equal with high
probability. The results may not be equal due to hash collisions.  
 Due to space requirement we do not show the proof  here. 
Here, $\pred{EncryptSubstitution}^\pred{KH}$ function encrypts a plaintext
 substitution with provenance to an encrypted one and is very similar
 to the $\pred{EncryptPolicyConstants}^\pred{KH}$ function (see Section \ref{section:kh}).\vspace*{-5pt} 

\begin{theorem}[Correctness of $\ereduce^{\pred{KH}}$]\label{thm:khcorrectness}
For all plaintext policies $\policy_\pred{P}$ and $\psi_\pred{P}$, 
for all constant encrypted policies $\policy_\pred{E}$ and $\psi_\pred{E}$, 
for all database schema  \cS, 
for all plaintext audit logs $\cL=\langle\db, \cT\rangle$, 
for all encrypted audit logs $e\cL=\langle e\db, e\cT\rangle$, 
for all plaintext substitution $\sigma_\pred{P}$, 
for all encrypted substitution $\sigma_\pred{E}$, 
for all \inp, 
for all equality scheme \escheme, 
for all security parameter \param,
for all encryption keys \keys, 
for all token list $\vec{\Delta}$, if all of the following holds:   
(1) $\inp\nsat\policy_\pred{P}: \mymap$, 
(2) $[\sigma_\pred{P}]\supseteq\inp$, 
(3) $\keys = \pred{KeyGen}^\pred{KH}(\param,\cS)$, $\vec{\Delta} = \pred{GenerateToken}(\cS,\escheme, \keys)$, 
(4) $e\cL = \pred{EncryptLog}^{\pred{KH}}(\cL, \cS, \keys)$, 
(5) $\policy_\pred{E} = \pred{EncryptPolicyConstants}^{\pred{KH}}(\policy_\pred{P}\\,\keys)$, 
(6) \pred{AKH} key adjustment  is correct, 
(7) $\sigma_\pred{E} = \pred{EncryptSub}\\\pred{stitution}^{\pred{KH}}(\sigma_\pred{P},\keys)$, 
(8) $\psi_\pred{P} = \reduce(\cL,\sigma_\pred{P},\policy_\pred{P})$, 
(9) $\psi_\pred{E} = \ereduce^{\pred{KH}}(e\cL,\policy_\pred{E},\vec{\Delta},\sigma_\pred{E})$, and 
(10) $\psi'_\pred{P} = \pred{DecryptPolicyConst}\\\pred{ants}^{\pred{KH}}(\psi_\pred{E},\keys)$,   
then $\psi_\pred{p} = \psi'_\pred{P}$ with high probability. 
\end{theorem}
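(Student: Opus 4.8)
The plan is to prove the statement by structural induction on the plaintext policy $\policy_\pred{P}$, carrying along a \emph{simulation invariant}: at every pair of corresponding recursive calls --- $\reduce$ on $(\cL,\sigma_\pred{P},\policy_\pred{P})$ and $\ereduce^\pred{KH}$ on $(e\cL,\policy_\pred{E},\vec{\Delta},\sigma_\pred{E})$ with $e\cL=\pred{EncryptLog}^\pred{KH}(\cL,\cS,\keys)$, $\policy_\pred{E}=\pred{EncryptPolicyConstants}^\pred{KH}(\policy_\pred{P},\keys)$ and $\sigma_\pred{E}=\pred{EncryptSubstitution}^\pred{KH}(\sigma_\pred{P},\keys)$ --- the residual returned by $\ereduce^\pred{KH}$, once passed through $\pred{DecryptPolicyConstants}^\pred{KH}(\cdot,\keys)$, equals the residual returned by $\reduce$, \emph{unless} a collision occurs among the polynomially many \pred{AKH} hashes computed during the run. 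Hypotheses (1)--(2), i.e.\ the mode-check judgement $\inp\nsat\policy_\pred{P}:\mymap$ together with $[\sigma_\pred{P}]\supseteq\inp$, guarantee that both recursions are well founded (so the induction is legitimate and $\sathat$ enumerates only finitely many substitutions) and that every variable $\sat$/$\sathat$ must look up already carries a provenance in $\sigma_\pred{E}$; hypothesis (6) is \pred{AKH} adjustment correctness; and hypothesis (3), $\vec{\Delta}=\pred{GenerateToken}(\cS,\escheme,\keys)$, together with the defining property of the equality scheme $\escheme$, ensures that whichever token $\Delta_{\ell\rightarrow\pred{p}.i}$ is demanded by rule \textsc{R-P} or \textsc{S-P} does lie in $\vec{\Delta}$ (with the directionality that $\escheme$ fixes).

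First I would establish two mutually recursive auxiliary lemmas. The \emph{lookup lemma} for $\sat$ says: the SQL query that \textsc{R-P}/\textsc{S-P} issues against $e\db$ (resp.\ the consultation of $e\cT$ for a $\pred{timeOrder}$ atom) returns the $\pred{EncryptSubstitution}^\pred{KH}$-image of exactly the set of plaintext substitutions that the plaintext $\sat$ returns against $\cL$. The argument has three pieces: (i) after the $\pred{Adjust}$ steps, the hash in position $i$ of the ground atom is $\pred{Hash}(k_i,v)$, where $k_i$ is the hash key of column $i$ of $\pred{p}$ in $\keys$ and $v$ the underlying plaintext value --- this is hypothesis (6) applied along the provenance chain; (ii) since $e\db$ was produced by $\pred{EncryptLog}^\pred{KH}$, an encrypted row matches the queried hashes iff the corresponding plaintext row of $\cL$ matches the plaintext atom, the \emph{only} way this can fail being a collision between two distinct plaintext values; (iii) $\pred{EncryptLog}^\pred{KH}$ copies the ``table may be extended'' bit verbatim, so the $\pred{uu}$ verdict is returned in the encrypted run exactly when in the plaintext run, and decrypting the returned ground atom (its cipher components $\pred{Encrypt}(k_e,\cdot)$ are untouched by $\pred{Adjust}$) recovers the very ground atom $\reduce$ would return. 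For $\pred{timeOrder}$ this specializes to the correctness of $e\cT$ from Section~\ref{section:dope}, namely $e\cT(\pred{Enc}(t_1),\pred{Enc}(d_1))\le e\cT(\pred{Enc}(t_2),\pred{Enc}(d_2))$ iff $t_1+d_1\le t_2+d_2$, using that all timestamps and all displacement constants of $\policy_\pred{E}$ are encoded under $K_\pred{time}$. The \emph{$\sathat$ lemma} then lifts this by sub-induction on the guard $g$: for $g_1\wedge g_2$, $g_1\vee g_2$, $\exists x.g$, $\top$, $\bot$ one checks that $\pred{DecryptPolicyConstants}^\pred{KH}$ and the operations $\bigcup_i$, $\downarrow$, $[\cdot]$ commute with the plaintext $\sathat$ clauses, reducing each case to the sub-inductive hypothesis and the lookup lemma; in particular the decrypted ``already-seen'' set $[\Sigma'_\pred{E}\downarrow\vec{x}]$ equals $[\Sigma'_\pred{P}\downarrow\vec{x}]$.

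With these in hand the main induction is routine. The cases $\top,\bot$ are immediate; $\policy_1\wedge\policy_2$ and $\policy_1\vee\policy_2$ follow from the two inductive hypotheses since decryption distributes over $\wedge$ and $\vee$; the predicate case is the lookup lemma; and the quantifier cases $\forall\vec{x}.(g\rightarrow\policy)$ and $\exists\vec{x}.(g\wedge\policy)$ combine the $\sathat$ lemma (the encrypted witness set $\Sigma'$ decrypts to the plaintext one) with the inductive hypothesis applied to $\policy$ at each $\sigma_i\in\Sigma'$, plus the fact that $\pred{DecryptPolicyConstants}^\pred{KH}$ commutes with the residual constructor $\bigwedge_i(\cdot)\wedge\forall\vec{x}.(g\wedge\vec{x}\notin[\Sigma'\downarrow\vec{x}]\rightarrow\policy)$. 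Finally I would discharge the probability: the whole run performs a number of \pred{AKH} hash evaluations polynomial in $\param$ (finiteness of $\cL$ and $\policy$, plus the mode check bounding $\sathat$), so a union bound over all pairs reduces ``some collision occurs'' to the collision resistance of \pred{AKH} hashes --- $\pred{Hash}(k,v)$ being a deterministic function of the pseudorandom value $\deter(k_\pred{master},v)$ --- which fails only with probability $\pred{negl}(\param)$; off that negligible event, $\psi_\pred{P}=\psi'_\pred{P}$ exactly.

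\textbf{Main obstacle.} The delicate step will be the two auxiliary lemmas for $\sat$/$\sathat$: one must thread provenance through nested guards, confirm that the exact adjustment token required is always the one $\pred{GenerateToken}$ emits (respecting the directionality of $\escheme$), and localise \emph{every} possible discrepancy to a single hash collision so the error can be union-bounded cleanly. The outer induction on $\policy_\pred{P}$, given these lemmas and the fact that decryption commutes with all residual-formula constructors, is then essentially bookkeeping.
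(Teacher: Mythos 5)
The paper omits its proof of this theorem entirely (``Due to space requirement we do not show the proof here''), so there is nothing to compare line-by-line; your plan --- structural induction on the policy with a simulation invariant, mutually recursive lemmas for $\sat$/$\sathat$, and a union bound localising all failure to \pred{AKH} hash collisions --- is the natural argument and matches exactly the paper's own framing (``The results may not be equal due to hash collisions''). The one point worth making explicit when you flesh this out: the theorem's hypotheses never connect the mode-check output $\mymap$ of hypothesis~(1) with the equality scheme $\escheme$ used in $\pred{GenerateToken}$ in hypothesis~(3), and your token-availability step silently needs $\mymap\subseteq\escheme$ (with matching directionality), so you should either add that as a side condition or derive it from the paper's stated ``key property'' of $\escheme$.
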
 

The correctness theorem for $\ereduce^{\pred{DET}}$ states that the decrypted
result of \ereduce and the result of \reduce are equal.
We omit the formal statement of the correctness theorem for
$\ereduce^{\pred{DET}}$.


The security of \ereduce follows directly from Theorems
\ref{thm:security:det} and \ref{thm:security:kh}. As the security
theorem does not restrict what kind of computation the adversary runs
in polynomial time, \ereduce can be viewed as one instance of such
computation. Hence, \ereduce does not leak any additional information.


\section{EQ Mode Check}
\label{section:modeChecking}
We now present  \emph{EQ mode check}.
which extends mode check \cite{modechecking} introduced in logic
programming.
The EQ mode check is a static analysis of the policy that
serves two purposes: (i) It ensures that \ereduce algorithm
terminates for any policy that passes the check and (ii)
It outputs \emph{the equality scheme} \mymap of the policy,
which is used in both \dt and \kh (see Section
\ref{section:functions}). The EQ mode check runs in the linear
time of the size of the policy. The EQ mode check extends mode
check described in Garg \etal \cite{GJD11} by additionally carrying provenance
and key-adjustment information.
Next, we introduce modes, then we explain our EQ mode checking rules.

\vspace{5pt}
\noindent\textbf{Mode specification.}
The concept of ``\emph{modes}'' comes from logic programming
\cite{modechecking}.  Let us use the following predicate as an example:
Predicate $\pred{tagged}(m, q,
a)$ is true 
when the message $m$ is tagged with principal $q$'s attribute
$a$. Assuming the number of possible messages in English language is
\emph{infinite}, the number of concrete values for variables $m$, $q$,
and $a$ for which \pred{tagged} holds is also \emph{infinite}.
However, if we are given a concrete message (\ie, concrete value for
the variable $m$), then the number of concrete values for $q$ and $a$
for which \pred{tagged} holds
is finite. 
Hence, we say the predicate \pred{tagged}'s argument position 1 is
the input position (denoted by ``$+$'') whereas the argument positions 2 and 3 are output argument positions (denoted
by ``$-$''). We call such a description of input and output specification of a predicate its \emph{mode specification}.
Mode specification of a given predicate signifies that given concrete
values for variables in the input position, the number of concrete
values for variables in the output position for which the given
predicate holds true is \emph{finite}.  Hence $\pred{tagged}(m^+, q^-,
a^-)$ is a valid mode specification whereas $\pred{tagged}(m^-, q^-,
a^+)$ is not.


\vspace{5pt}\noindent\textbf{EQ mode checking.}
EQ mode check uses the mode specification of predicates to check
whether a formula is well-moded. EQ mode check has
two types of judgements: $\inp\bsat g:\langle\outp,\mymap\rangle$ for
guards, and $\inp\nsat\policy:\mymap$ for policy formulas.
Each element of the sets
\inp and \outp is a pair of form $\langle x, \pred{p}.a\rangle$ which
signifies that we obtained concrete value for the variable $x$ with
provenance $\pred{p}.a$ (\ie,
source of the value).
%

\begin{figure}[tb]
\centering
\(
\begin{array}{c}
\inferrule*[Right=g-Pred]
{\forall k\in I(\pred{p}).t_k\in\var\rightarrow t_k\in\fe(\inp)\\\\
\outp=\inp\cup\displaystyle\bigcup_{j\in O(\pred{p})\wedge
t_j\in\var\wedge t_j\notin\fe(\inp)}
\langle t_j, \pred{p}.j\rangle\\\\
\scriptstyle
\mymap=\{\langle \pred{p}'.i, \pred{p}.l\rangle|
0< l\leq\arity{p}\wedge t_l\in\var\wedge\langle t_l,\pred{p}'.i\rangle\in\inp
\}
}{\inp\bsat\pred{p}(t_1,\ldots,t_n):\langle\outp,\mymap\rangle}  \\[2ex]
\inferrule*[Right=g-Conj]
{\inp\bsat g_1: \langle\chi,\mymap_1\rangle\\ \chi\bsat g_2:\langle\outp,\mymap_2\rangle
}{\inp\bsat g_1\wedge g_2: \langle\outp,\mymap_1\cup\mymap_2\rangle}
\\[1ex]
\inferrule*[Right=g-Disj]
{\inp\bsat g_1: \langle\chi_1,\mymap_1\rangle\\ \inp\bsat g_2:\langle\chi_2,\mymap_2\rangle
}{\inp\bsat g_1\vee g_2: \langle\chi_1\Cap\chi_2,\mymap_1\cup\mymap_2\rangle}
\end{array}
\)
\vspace*{-0.1in}
\caption{Selected $\inp\bsat g:\langle\outp,\mymap\rangle$ judgements~\label{fig:bsat}}
\end{figure}

The top level judgement $\inp\nsat\policy:\mymap$ states that given
ground values for variables in set \inp, the formula \policy is
\emph{well-moded} and that audit $\policy$ would require the equality
checking for column pairs given by \mymap.  We call a given policy
\policy \emph{well-moded} if there exists a \mymap for which we can
derive the following judgement: $\{\}\nsat\policy:\mymap$.  The
judgement \nsat uses \bsat as a sub-judgement in the quantifier case
and we explain \bsat first.  The judgement $\inp\bsat
g:\langle\outp,\mymap\rangle$ states that: Given concrete values for
variables in the set \inp, the number of concrete values for variables
in the set \outp (\outp is a subset of the free variables of $g$) for
which the formula $g$ holds true is \emph{finite}. It also outputs the
column pairs which may be checked for equality during evaluation of
$g$.

Selected mode checking rules for guards are listed in
Figure~\ref{fig:bsat}. We explain these rules using an example.
We show how to check
the following formula $g\equiv (p(x^-)\vee q(x^-,z^-))\wedge
r(x^+, y^-)$ with $\inp=\{\}$ 
($\inp=\{\}$ signifies that we do not have concrete values for any variables
yet). The function $I$ (resp., $O$) takes as input a predicate $p$ and
returns all input (resp., output) argument positions of $p$. For
instance, $I(r) = \{1\}$ and $O(r) = \{2\}$.

First, \textsc{G-CONJ} rule
($g^c_1\wedge g^c_2$) applies.
The first premise of \textsc{G-CONJ} requires that  $g^c_1\equiv(p(x^-)\vee q(x^-,z^-))$
is well moded with $\inp=\{\}$. Now,
\textsc{G-DISJ} ($g^d_1\vee g^d_2$) rule can be used to
check the well modedness of $g^c_1$ with $\inp=\{\}$. 
The first and second premise require $g^d_1\equiv p(x^-)$ and $g^d_2\equiv q(x^-,z^-)$ to be independently well moded with
the input $\inp=\{\}$. While evaluating $p(x^-)$ with $\inp=\{\}$ we see the \textsc{G-PRED} rule is applicable. The first premise checks whether
all input variables of $p$, none in this case, are included in \inp; this is trivially satisfied. We use an auxiliary function \fe for checking this,
defined as follows: $\fe(\inp) = \{x\mid\exists \pred{p}, i. \langle x, \pred{p}.i\rangle\in\inp\}$. After evaluating $p$, we will get
concrete values for variable(s) in output  position of $p$ (\ie, $x$ in this case with provenance $p.1$),
hence $\outp=\{\langle x, p.1\rangle\}$, which is expressed in premise 2. Finally, we did not have concrete values for $x$ in \inp, so
we would not need any comparison, hence $\mymap=\{\}$ (premise 3). Similarly, we can derive, $\{\}\bsat  q(x^-,z^-):\langle\{\langle x,q.1\rangle,\langle z,q.2\rangle\},
\{\} \rangle$. Once we have determined both $g^d_1$ and $g^d_2$ are well moded, we see that we are only guaranteed to have
concrete value for variable $x$ (if $g^d_1$ is true we will not get concrete value for $z$) but $x$ can either have provenance $p.1$ or $q.1$.
We have to keep track of both, which is captured using the $\Cap$ operator defined as follows:
$\chi_1\Cap\chi_2=\{
\langle x, p_1.a_1\rangle
\mid
\exists p_2,a_2.((\langle x, p_1.a_1\rangle\in\chi_1 \wedge \langle x, p_2.a_2\rangle\in\chi_2)\bigvee(\langle x, p_1.a_1\rangle\in\chi_2 \wedge \langle x, p_2.a_2\rangle\in\chi_1)).
\}$. So we have, $\{\}\bsat g^c_1:\langle \{\langle
x,p.1\rangle,\langle x,q.1\rangle\} ,\{\}\rangle$.

Now let's go back to the second premise of
\textsc{G-CONJ}, which requires that  $r(x^+, y^-)$ is well-moded with respect to $\chi=\{\langle x,p.1\rangle,\langle x,q.1\rangle\}$. The \textsc{G-PRED} rule
is applicable again for checking this. The first premise, requiring variables in input argument position (\ie, $x$ in this case) are
given by $\inp$, is satisfied. According to the second premise, we will additionally get concrete value for $y$ with provenance $r.2$ hence
$\outp=\{\langle x,p.1\rangle,\langle x,q.1\rangle, \langle y,r.2\rangle\}$. Finally, concrete value for $x$ (with provenance $p.1$ and $q.1$)
is needed while evaluating $r$ ($x$ is in input argument position 1 of $r$), hence we need to check for equality between the following column pairs, $p.1$, $r.1$ and
$q.1$, $r.1$, \ie, $\mymap=\{\langle p.1,r.1\rangle,\langle q.1,r.1\rangle\}$.  The directionality of the pairs, \eg, $\langle p.1,r.1\rangle$,
matters for efficiency. For instance, we are given a concrete value  hashed with key $k_{p.1}$ for variable $x$. While evaluating $r$ we would need concrete
value of $x$ ($x$ is in input argument position of $r$). Now, if we are given the token $\Delta_{r.1\rightarrow p.1}$ we cannot directly evaluate $r$  without incurring
additional computational overhead.

Top-level mode checking rules for policy formulas are very similar to
those for guards, except that formulas do not ground variables.  We
show the rule for universal quantification below.  The audit algorithm
checks formulas of form $\forall\vec{x}.(g\rightarrow\policy)$ by
first obtaining finite number of substitutions for $\vec{x}$ that satisfy
$g$ and then checking whether \policy holds true for each of these
substitution.
\[
\inferrule*[Right=Univ]{\inp\bsat g:\langle\outp,\mymap_g\rangle\\ \vec{x}\subseteq\fe(\outp)\\\\
fv(g)\subseteq\fe(\inp)\cup\{\vec{x}\} \\ \outp\nsat\policy:\mymap_c}
{\inp\nsat\forall\vec{x}.(g\rightarrow\policy):\mymap_g\cup\mymap_c}
\]
The first premise of \textsc{UNIV} 
checks that we have only finite number of substitutions for $\vec{x}$ that satisfy $g$ with respect to
\inp and with equality scheme $\mymap_g$. This is necessary for
termination while checking universal formulas as the domain of the
variables can be
infinite. We then check whether we have substitutions for all quantified variables $\vec{x}$ and also all the free variables
of $g$. Finally, we inductively check whether \policy is well-moded with respect to the ground variables we obtained while
evaluating $g$ with equality scheme $\mymap_c$. Then, the resulting equality scheme is $\mymap_g\cup\mymap_c$.



\section{Implementation and Evaluation}
\label{section:implementation}
\begin{figure}[b!]
\centering
\includegraphics[width=0.40\textwidth]{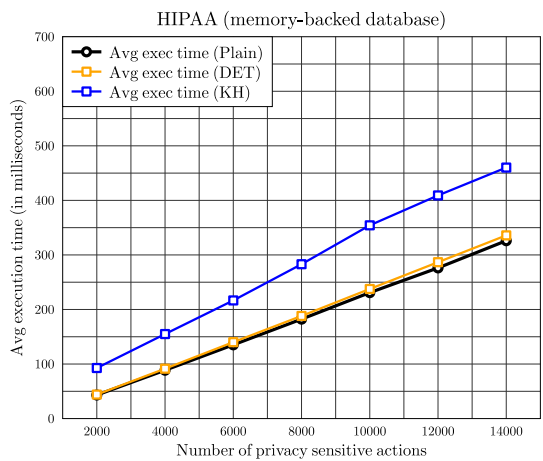}
\vspace*{-0.1in}
\caption{\textbf{Experimental results HIPAA}\label{fig:memory-timing}}
\end{figure}
We report on our empirical evaluation of \ereduce on the \dt and \kh
schemes. We run experiments on a 2.67GHz Intel Xeon X5650 CPU with
Debian Linux 7.6 and 50GB of RAM, of which no more than 3.0 GB is
used. SQLite version 3.8.7.1 is used to store the plaintext and
encrypted logs. We aggressively index all database columns in input
argument positions as specified in mode specifications. In \dt, the
index is built over deterministically encrypted values; in \kh, the
index is built over hashed values. For deterministic encryption, we
use AES with a variation of the CMC mode~\cite{CMC03} with a fixed IV
and a 16 byte block size. We use 256 bit keys. For the AKH scheme, we
use the library by Popa~\etal~\cite{cryptDB}. The underlying elliptic
curve is the NIST-approved NID\_X9\_62\_prime192v1. We use privacy
policies derived from the GLBA and HIPAA privacy rules and cover 4 and
13 representative clauses of these rules, respectively.

We use synthetically generated plaintext audit logs. Given an input
policy and a desired number of privacy sensitive actions, our audit
log generation algorithm randomly decides whether each action will be
policy compliant or not. To generate log entries for a compliant
action, the algorithm traverses the abstract syntax tree of the policy
and generates 
instances of atoms that together satisfy the
policy. For the non-compliant actions, we randomly choose atoms to
falsify a necessary condition.  Our synthetic log generator also
outputs the \DOPE data structure but with plaintext values for
timestamps.  We generate logs with 2000 to 14000 privacy-sensitive
actions. Each plaintext log is encrypted with both the \dt and \kh
schemes.  The maximum plaintext audit log size we considered is 17
MB. The corresponding maximum encrypted log sizes in \dt and \kh are
67.3MB and 267MB, respectively. Most of the size of \kh-encrypted log
comes from the keyed hashes.

We measure the relative overhead of running \ereduce on logs encrypted
with \dt and \kh, choosing \reduce on plaintext audit log as the
baseline. We experiment with both RAM-backed and disk-backed versions
of SQLite. We report here only the memory-backed results
(the disk-backed results
are similar). Figure~\ref{fig:memory-timing} shows the average
execution time per privacy-sensitive action for the HIPAA
policy in all three configurations (GLBA results are similar). The number of privacy-sensitive
actions (and, hence, the log size) varies on the x-axis.
The overhead of
\dt is very low, between 3\% and 9\%. This is unsurprising, because
no cryptographic operations are needed during audit. The
overhead comes from the need to read and compare longer
(encrypted) fields in the log and from having to use the \DOPE data
structure. With \kh, overheads are much higher, ranging from 63\% to
406\%. These overheads come entirely from two sources: the cost of
reading a much larger database and the cost of performing hash
adjustments to check equality of values in different columns.
We observe that the overhead due to the
increased database size is more than that due to hash adjustment. For
the policies we experimented with, the per-action overhead due to
database size grows linearly, but the overhead due to hash adjustments
is relatively constant. There is substantial room (\ie, 30\% of the total overhead incurred by \kh) 
for improving the
efficiency of \ereduce on \kh, \eg, by caching previous key-adjustments, which we
currently do not do.

\section{Related Work}
\label{section:relatedWork}
In this section, we briefly review the existing work that are most relevant to our
approach.

\noindent
\textbf{Functional encryption:}
Function encryption scheme~\cite{BSW11,GPSW06,LOSTW10,neil10} enables one (possibly with the possession of some tokens)
to calculate a function value over encrypted arguments. The output of the
function is in plaintext unlike homomorphic encryption~\cite{Gentry09}.  Our approach can be viewed as
attempting to mimic functional encryption where the function is compliance checking of a given policy.
We have the following differences with functional encryption: (i) some of the output can be encrypted in our case,
(ii) we have multiple arguments each of which can be encrypted with different keys,
 and (iii) also we cannot hide the function being computed as the policy is
public knowledge. However, Goldwasser~\etal~\cite{Goldwasser14} introduce multi-input functional encryption in which
multiple arguments are considered.

\noindent
\textbf{Predicate encryption:}
\emph{Property-preserving encryption}~\cite{PR12} or \emph{predicate encryption}~\cite{SSW09,BW07,KSW08}
can be viewed as a special case of the functional encryption where the function returns boolean value.
The compliance checking can be viewed as a variation of predicate encryption where the predicate outputs
`0' for violation of the policy and `1' for satisfaction. Traditionally, predicate encryption schemes consider
arguments encrypted with a single encryption key whereas in our case we have arguments encrypted with multiple encryption keys.
Pandey and Rouselakis~\cite{PR12}
present several notions of security for symmetric predicate encryption and
constructs one such encryption scheme for the checking the orthogonality of two encrypted vectors ($\vec{x}\cdot\vec{y} \overset{?}{=} 0\mbox{ mod }p$).
Our security definition \ECPLA is inspired by their
\pred{LoR} security notion. Shen~\etal~\cite{SSW09} present a symmetric predicate encryption scheme which supports
  inner product queries. Their approach can also be used for equality checking but using it will result in the database indexing
to be unusable in our case. They introduce a notion of security called \emph{predicate privacy}. Our approach
cannot provide \emph{predicate privacy} as the policy is known to the adversary.


\noindent\textbf{Structured encryption:}
Chase and Kamara~\cite{CK11} introduce structured encryption
for structured data and which maintains the structure of the data after the encryption.
They construct encryption schemes for graph structures and allow adjacency queries, neighboring queries,
and also focused subgraph queries. Our encryption of the audit logs could be viewed as an instance of structured encryption
where the queries we are interested in, are relevant to compliance checking with respect to a given policy.

\noindent\textbf{Searchable audit log:}
Waters~\etal~\cite{WBDS04} present a framework in which they allow both confidentiality and integrity protection
with the ability to search the encrypted audit logs based on some keywords. They use hash chains for integrity protection
and use identity-based encryption~\cite{BF03} with extracted keywords to enable searching the audit log~\cite{BCOP03}.
In our current work, we only consider confidentiality of the data and assume existence of complementary techniques to
ensure integrity of the audit log~\cite{SK98,SK99,KS99,Holt06}. The policies against which we check the audit log is more expressive than what they consider.
Additionally, we require time-stamp comparison which is their framework does not allow.

\noindent\textbf{Searchable encrypted audit log:} Waters~\etal
present a framework in which they provide both confidentiality and
integrity protection with the ability to search the encrypted audit
logs based on keywords~\cite{WBDS04}. They use 
use identity-based encryption~\cite{BF03} with
extracted keywords to enable searching the audit log~\cite{BCOP03}.
We only consider confidentiality of the data and
assume existence of complementary techniques to ensure integrity of
the audit log~\cite{SK99}. The policies we consider are more
expressive than theirs and we support time-stamp
comparison which is not present in their framework.

\noindent \textbf{Order-preserving encryption:}
Boldyreva~\etal~\cite{BCLO} present a symmetric encryption scheme that
maintains the order of the plaintext data which does not satisfy the
ideal \OCPA security definition.  Popa~\etal present the
\MOPE scheme which we enhance to support timestamp comparison with
displacements~\cite{mope}.  Recently, Kerschbaum and Schr\"{o}pfer
present a keyless order preserving encryption scheme for outsourced
data~ \cite{KS14}. In their approach, the owner of the plaintext data is required
to keep a dictionary of mapping from plaintext to ciphertext which is
undesirable in our scenario.

\noindent
\textbf{Querying outsourced database:}
Hacig\"{u}m\"{u}\c{s} \etal \cite{HILM02} develop a system that allows
querying over encrypted data. Their strategy is to ask the client to
decrypt data to enable
operations on the encrypted data.
Tu \etal \cite{TKMZ13} introduces split client/server query execution
for processing analytical queries on encrypted databases.
\framework does not require any query processing on the
client side as it is untrusted.
Damiani \etal \cite{DVJPS03} developed a secure indexing approach for
querying an encrypted database. We do not require modification to the
indexing algorithm of the DBMS. 

\emph{cryptDB} developed by Popa~\etal~\cite{cryptDB} allows queries
over encrypted databases. Their goal is to leave the
front-end of the application running on the client side, untouched and
relying on a trusted proxy to perform appropriate encryptions and
decryptions of the queries and the results.  Our setting is different
in that our application runs in the cloud without requiring the
existence of a trusted proxy.  Unlike cryptDB, \framework supports a
restricted set of SQL queries for auditing. We prove
end-to-end security guarantees of our algorithm, which has not been
done for cryptDB.

\noindent \textbf{Privacy policy compliance checking:}
Prior work on logic-based compliance checking algorithms focus on plaintext logs
~\cite{BKMZ12,BKMZ13,bauer13,CJDD14,GJD11}. We are the first to use
logic-based approach to check encrypted logs for policy compliance.

\noindent\textbf{Auditing retention policies:}
Lu~\etal~\cite{LMI13} presents a framework for auditing the changes to database with respect to a retention policy.
They also consider retention history being incomplete. However, they work on plaintext audit log. To support their
queries \framework needs to be enhanced.

\section{Summary}
\label{section:conclusion}
We presented an auditing algorithm that checks compliance over
encrypted audit logs with respect to an expressive class of privacy
policies.  We introduced a novel notion of audit log equivalence that
enables us to obtain an end-to-end security definition that precisely
captures the information leakage during the auditing process. We then
prove secure, two instances of the auditing algorithm which differ in
the encryption scheme, under this definition.
Empirical evaluation demonstrates that both instances of our algorithm
have low to moderate overhead compared to a baseline algorithm that
only supports plaintext audit logs.


\bibliographystyle{abbrv}


\end{document}